\numberwithin{equation}{section}
\begin{document}

\title{Scaling Limits for Super--replication with Transient Price Impact}

\author{ Peter Bank\footnote{Technische Universit{\"a}t Berlin,
    Institut f{\"u}r Mathematik, Stra{\ss}e des 17. Juni 136, 10623
    Berlin, Germany, email \texttt{bank@math.tu-berlin.de}.  }
  \hspace{2ex} Yan Dolinsky\footnote{ Department of Statistics, Hebrew
    University and School of Mathematical Sciences, Monash University.
    email \texttt{yan.dolinsky@mail.huji.ac.il}.  The author YD is
    partially supported by the ISF Grant 160/17} }

\date{\today}

\maketitle
\begin{abstract}
  We prove a scaling limit theorem for the super-replication cost of
  options in a Cox--Ross--Rubinstein binomial model with transient
  price impact. The correct scaling turns out to keep the market depth
  parameter constant while resilience over fixed periods of time grows
  in inverse proportion with the duration between trading times. For
  vanilla options, the scaling limit is found to coincide with the one
  obtained by PDE-methods in \cite{GokaySoner:12} for models with
  purely temporary price impact. These models are a special case of
  our framework and so our probabilistic scaling limit argument allows
  one to expand the scope of the scaling limit result to
  path-dependent options.
 \end{abstract}

\begin{description}
\item[Mathematical Subject Classification (2010):] 91G10, 60F05

\item[Keywords:] Super-replication, Liquidity, Binomial model, Limit theorems
\end{description}

\maketitle

\markboth{P. Bank and Y. Dolinsky}{Scaling Limit with Transient Price Impact}
\renewcommand{\theequation}{\arabic{section}.\arabic{equation}}
\pagenumbering{arabic}

\section{Introduction}\label{sec:1}
 \label{intro}

 Super-replication in continuous-time financial models with market
 frictions is well known to typically lead to trivial buy-and-hold
 results. For markets with proportional transaction costs, this was
 first established rigorously by \cite{SSC95}. For
 discrete-time models, \cite{Kusuoka:95} used a dual description of
 super-replication costs to determine a regime that yields a
 non-trivial scaling limit for vanishing transaction costs. Such
 scaling limits were recently obtained in the multivariate case
 in \cite{BankDolinskyPerkkio:17}, for fixed costs in \cite{BankDolinsky:16}, and
 for purely temporary nonlinear costs as specified by
 \cite{CetinJarrProt:04} in \cite{GokaySoner:12,DolinskySoner:13,
   BankDolinskyGokay:16}.

The present paper yields such a scaling limit
 result for models with transient price impact where also past trades
 affect the spread at which present transactions are executed; see
 \cite{ObiWang:13, AlfFruSch:10, PredShaShr:11} for
 models of this type for optimal liquidation problems and
 \cite{BankVoss:18b} for an optimal investment study of such a
 model. The present paper is motivated by
 \cite{BankDolinsky:18} which confirms the triviality of
 super-replication costs also for continuous-time models with
 transient price impact. We therefore introduce in this paper a
 discrete-time version of the model considered in
 \cite{BankDolinsky:18} and, for the special case of a binomial
 Cox--Ross-Rubinstein reference model, compute the scaling limit of
 super-replication costs when market resilience becomes infinite.

 It turns out that the resulting scaling limit coincides with the
 scaling limit obtained for binomial models with purely temporary
 price impact and modified market depth, as studied (for the geometric
 random walk case) in \cite{GokaySoner:12,DolinskySoner:13,
   BankDolinskyGokay:16} . In this regard, it nicely complements the
 high-resilience asymptotics carried out by \cite{RochSoner:13} who
 prove convergence in probability of wealth dynamics.

 Our approach for computing the scaling limit is purely
 probabilistic. The proof of the lower bound is done in two steps.  In
 the first step, we establish a simple lower bound for the
 super--replication prices in terms of consistent price systems with
 ``small'' spread.  The second step is to use Kusuoka's techniques
 from \cite{Kusuoka:95} to construct, for a given martingale $M$ on
 Wiener space with suitably regular volatility process, a sequence of
 consistent price systems for our binomial reference models with
 vanishing spread which converge in law to $M$. Kusuoka's techniques
 are particularly useful here as they also allow us to control the
 approximation of the quadratic variation of $M$.

 The proof of the upper bound is more complicated. First, we notice
 that the portfolio value in the transient price impact dominates from
 above the portfolio value in a quadratic costs setup with a modified
 market depth which can be viewed as a binomial version of the
 temporary price impact model introduced in \cite{CetinJarrProt:04}.
 The key step is then to establish an upper bound for the
 super--replication prices with such quadratic costs. Using the
 pathwise Doob inequalities of \cite{Acciaioetal:13}, we argue that it
 essentially suffices to super--replicate the payoff knocked out when
 the underlying fluctuates ``too much''. For this ``tamed'' payoff, we
 identify a rich enough subclass of constrained trading strategies,
 for which super--replication costs remain unchanged asymptotically,
 but whose dual consistent price systems turn out to be tight. This
 new technique to obtain tightness in fully quadratic costs problems
 is key for our analysis and allows us to resolve an open question
 from \cite{DolinskySoner:13,BankDolinskyGokay:16} who had to impose
 linear growth constraints on transaction costs and only allowed
 quadratic costs in an ever smaller region around zero.  As a
 by-product of our probabilistic approach, we obtain an extension of
 the limit result of \cite{GokaySoner:12}, who used PDE-techniques, from
 vanilla options to path-dependent options.

  The paper is organized as follows. In Section~\ref{sec:2} we
  formalize the super-replication problem with transient price impact
  and give a duality result.  Section~\ref{sec:3} formulates and
  discusses our scaling limit result and gives its proof.

\section{Super-replication with transient price impact in discrete time}\label{sec:2}
\setcounter{equation}{0}
\subsection{A discrete-time model with transient price impact}

In this section we develop and analyze a discrete-time version of the
continuous-time financial model studied in \cite{BankDolinsky:18}
where the trades of a large investor affect an asset's price in a
transient manner. Specifically, we fix a filtered probability space
$(\Omega,\cF,(\cF_n)_{n=0,\dots,N},\P)$ and consider an adapted,
real-valued process $P=(P_n)_{n=0,\dots,N}$ to describe the evolution
of an asset's fundamental value at times $n=0,\dots,N$. In addition to
this asset, a large investor has at her disposal a bank account that,
for simplicity, bears no interest. She is endowed with an initial
position of $X_0 \set x_0\in\RR$ units of the asset and is free to
choose her position $X_n \in \cF_{n-1}$ in which she will confront the
$n$th fundamental shock $\Delta P_n \set P_n-P_{n-1}$,
$n=1,\dots,N$. We will let $\cX$ denote the collection of all these
strategies $X$. In line with \cite{HubermanStanzl:04}, the investor's
transactions have a linear permanent impact on the asset's price
beyond its fundamental value. So, the mid-price evolves according to
\begin{align*}
    P^X_n \set P_n + \iota X_n, \quad n=0,\dots,N.
\end{align*}
In addition, the investor's transactions affect the half-spread, i.e.,
the mark-up above (resp. below) the mid-price $P^X$ at which the
investor's orders are filled when she buys (respectively sells) the
asset. We model this quantity by
\begin{align}
    \zeta^X_0 &\set \zeta_0, \\ \label{eq:12a}
    \zeta^X_n &\set (1-r)\zeta^X_{n-1} +\frac{|X_n-X_{n-1}|}{\delta}, \quad n=1,\dots,N.
\end{align}
Here, $\zeta_0 \geq 0$ is the given initial half-spread. The
investor's trades widen the spread in inverse proportion to the
market's depth $\delta>0$, assumed to be constant for simplicity. The
constant $0 < r \leq 1$ measures the market's resilience and
describes the fraction by which the spread will diminish over a
trading period. It is convenient (and quite appropriate) to assume
that transactions affect mid-prices and spreads gradually, letting the
first bits of the $n$th transaction $X_n-X_{n-1}$ be filled at the
favorable pre-transaction mid-price $P_{n-1} + \iota X_{n-1}$ and at
the pre-transaction spread $(1-r)\zeta^X_{n-1}$ while the last bits
are filled at the less favorable post-transaction levels
$P_{n-1} + \iota X_n=P^X_n-\Delta P_n$ and
$(1-r)\zeta^X_{n-1}+|X_n-X_{n-1}|/\delta=\zeta^X_n$. As a result, the
investor's given cash position $\xi^X$ evolves from its given initial
level $\xi_0 \in \RR$ according to
\begin{align}
    \xi^X_0 & \set \xi_0,\\ \label{eq:10}
    \xi^X_n & \set \xi^X_{n-1}-\left(P_{n-1}+\frac{\iota}{2}(X_n+X_{n-1})\right)(X_n-X_{n-1})\\&\qquad -\left((1-r)\zeta^X_{n-1}+\frac{1}{2\delta}|X_n-X_{n-1}|\right)|X_n-X_{n-1}|
\end{align}
at times $n=1,\dots,N$. A more tangible description of the investor's
cash positions is given by the following lemma.
\begin{Lemma}\label{lem:wealthdynamics}
 The investor's cash position at time $n=1,\dots,N$ is
 \begin{align} \label{eq:5}
     \xi^X_n &= \xi_0 - \sum_{m=1}^n P_{m-1}(X_m-X_{m-1}) -
               \frac{\iota}{2}(X^2_n-x_0^2)-\kappa^X_n\\
&= \xi_0 +x_0P_0-X_nP_n+ \sum_{m=1}^n X_m (P_m-P_{m-1})\\
&\qquad -               \frac{\iota}{2}(X^2_n-x_0^2)-\kappa^X_n,
 \end{align}
 where $\kappa^X$ describes the liquidity costs
 \begin{align}\label{eq:11}
     \kappa^X_n &= (1-r) \sum_{m=1}^n \zeta^X_{m-1}|X_m-X_{m-1}|+\frac{1}{2\delta}\sum_{m=1}^n (X_m-X_{m-1})^2\\\label{eq:12}
     &= \frac{\delta}{2}\left((\zeta^X_n)^2+(1-(1-r)^2)\sum_{1 \leq m < n} (\zeta^X_m)^2-(1-r)^2\zeta_0^2\right)
 \end{align}
 with
  \begin{align}
     \zeta^X_n &= (1-r)^n\zeta_0 + \frac{1}{\delta}\sum_{m=1}^n (1-r)^{n-m}|X_m-X_{m-1}|, \quad n=1,\dots,N.
 \end{align}
\end{Lemma}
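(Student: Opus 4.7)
The plan is to just unwind the one-step recursion \eqref{eq:10} for $\xi^X_n$ and then massage the resulting sums using summation by parts and the identity $a^2-b^2=(a+b)(a-b)$ to match the stated expressions. This is essentially a bookkeeping exercise.

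\textbf{Step 1: Telescoping in \eqref{eq:10}.} I would split the drop $\xi^X_{m-1}-\xi^X_m$ into three pieces. The piece $-P_{m-1}(X_m-X_{m-1})$ is already in the desired form. The piece $-\tfrac{\iota}{2}(X_m+X_{m-1})(X_m-X_{m-1}) = -\tfrac{\iota}{2}(X_m^2-X_{m-1}^2)$ telescopes over $m=1,\dots,n$ to $-\tfrac{\iota}{2}(X_n^2-x_0^2)$. The remaining piece sums to exactly the quantity defined as $\kappa^X_n$ in \eqref{eq:11}. This establishes the first equality in \eqref{eq:5} together with \eqref{eq:11}.

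\textbf{Step 2: Abel summation for the $P$-term.} Writing $X_mP_m - X_{m-1}P_{m-1} = X_m(P_m-P_{m-1})+P_{m-1}(X_m-X_{m-1})$ and summing gives
\begin{equation*}
\sum_{m=1}^n P_{m-1}(X_m-X_{m-1}) = X_nP_n - x_0P_0 - \sum_{m=1}^n X_m(P_m-P_{m-1}),
\end{equation*}
which, substituted into the first form of \eqref{eq:5}, yields the second form.

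\textbf{Step 3: Closed form \eqref{eq:12} for $\kappa^X$.} The key observation is that, by definition of $\zeta^X_m$ in \eqref{eq:12a},
\begin{equation*}
(1-r)\zeta^X_{m-1} + \frac{|X_m-X_{m-1}|}{2\delta} = \frac{1}{2}\bigl((1-r)\zeta^X_{m-1} + \zeta^X_m\bigr),
\qquad
|X_m-X_{m-1}| = \delta\bigl(\zeta^X_m - (1-r)\zeta^X_{m-1}\bigr).
\end{equation*}
Substituting both into \eqref{eq:11} produces a telescoping sum of differences of squares,
\begin{equation*}
\kappa^X_n = \frac{\delta}{2}\sum_{m=1}^n\left[(\zeta^X_m)^2 - (1-r)^2(\zeta^X_{m-1})^2\right],
\end{equation*}
and reindexing the second sum and separating the $m=n$ and $m=0$ boundary terms gives \eqref{eq:12}.

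\textbf{Step 4: Explicit $\zeta^X_n$.} The last identity is just the solution of the affine linear recursion \eqref{eq:12a} with initial value $\zeta_0$, obtained by iteration. The only mildly delicate point in the whole argument is the reindexing of the telescoping sum in Step 3, which I would expect to be the main source of book-keeping errors, but there is no genuine obstacle.
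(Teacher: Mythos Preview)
Your proof is correct and follows exactly the same approach as the paper: telescoping the $\iota$-term via $(X_m+X_{m-1})(X_m-X_{m-1})=X_m^2-X_{m-1}^2$, and deriving \eqref{eq:12} by expressing $|X_m-X_{m-1}|$ in terms of $\zeta^X_m$ and $\zeta^X_{m-1}$ through \eqref{eq:12a}. The paper's proof is just a two-line sketch of precisely the steps you wrote out.
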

\begin{proof}
  Identity~\eqref{eq:5} follows readily from~\eqref{eq:10} where the
  representation~\eqref{eq:11} of $\kappa^X_n$ is due to
  $\sum_{m=1}^n (X_m+X_{m-1})(X_m-X_{m-1})=\sum_{m=1}^n
  X^2_m-X^2_{m-1}=X^2_n-x_0^2$; \eqref{eq:12} follows readily by
  expressing $|X_m-X_{m-1}|$ in terms of $\zeta^X_m$ and
  $\zeta^X_{m-1}$ as made possible by~\eqref{eq:12a}.
\end{proof}
In particular, we see that the liquidity costs $\kappa^X$ are a convex
functional of the investor's trading strategy $X \in \cX$. This
observation opens the door for convex duality methods that indeed will
be key for our subsequent analysis.

\subsection{Super-replication duality}

Having established the investor's wealth dynamics, we can now consider
the problem to super-replicate a contingent claim specified by a
payoff $H \in \cF_N$ unaffected by the investor's transactions. More
precisely, we will try to characterize the super-replication costs
\begin{align}
    \pi(H)\set \inf\{\xi_0 \;:\; \xi^X_N \geq H \text{ a.s. for some $X \in \cX$ with $X_N$=0}\}.
\end{align}
For models with full resilience ($r=1$) as in \cite{CetinJarrProt:04}, a dual description of
super-replication costs has been obtained in \cite{DolinskySoner:13}. For
models with limited resilience ($r \in (0,1)$) such a description is
given by the following lemma which complements its continuous-time
analogue established in \cite{BankDolinsky:18}:

\begin{Proposition}
 If $r \in [0,1)$, the super-replication costs of any contingent claim
 $H \geq 0$ have the dual description
 \begin{align}\label{eq:501}
     \pi(H) = \sup_{(\QQ,M,\alpha)} \left\{\E_{\QQ}\left[H\right]-\frac{1}{2}\E_{\QQ}\left[\sum_{n=1}^N|\alpha_n-\zeta_0|^2\mu_n\right] - M_0 x_0-\frac{\iota}{2} x_0^2\right\},
 \end{align}
 where
$$\mu_n \set \delta (1-(1-r)^2)(1-r)^{2n} \text{ for }
n=1,\dots,N-1, \text{ and } \ \mu_N=\delta (1-r)^{2N},$$ and where the
supremum is taken over all triples $(\QQ,M,\alpha) $ of measures
$\QQ\ll\P$, square-integrable $\QQ$-martingales $M$ and
$\QQ$-square-integrable, predictable processes $\alpha$ with
 \begin{align}
     |P_{n-1}-M_{n-1}| \leq \frac{1}{\delta(1-r)^n}\E_{\QQ}\left[\sum_{m=n}^N \alpha_m \mu_m\middle|\cF_{n-1}\right], \quad n=1,\dots,N.
 \end{align}
 In the case $r=1$ corresponding to purely temporary impact, we have the simpler
 duality
 \begin{align}\label{eq:5011}
     \pi(H) = \sup_{(\QQ,M)} \left\{\E_{\QQ}\left[H\right]-\frac{1}{2\delta}\E_{\QQ}\left[\sum_{n=1}^N|P_{n-1}-M_{n-1}|^2\right] - M_0 x_0-\frac{\iota}{2} x_0^2\right\}
 \end{align}
 with a supremum over probabilities $\QQ \ll \P$ and all square-integrable
 $\QQ$-martingales~$M$.
\end{Proposition}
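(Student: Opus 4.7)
By Lemma~\ref{lem:wealthdynamics} and $X_N=0$, the super-replication requirement $\xi^X_N \geq H$ is equivalent to
\begin{align*}
\xi_0 \geq H - x_0 P_0 - \tfrac{\iota}{2} x_0^2 - \sum_{m=1}^N X_m \Delta P_m + \kappa^X_N \quad \text{a.s.,}
\end{align*}
and \eqref{eq:11}--\eqref{eq:12} make the right-hand side a convex function of the trades $\Delta X_m := X_m-X_{m-1}$. I will prove each direction of~\eqref{eq:501} by convex duality, paralleling the continuous-time argument of~\cite{BankDolinsky:18}.

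For the ``$\geq$''-direction, fix an admissible triple $(\QQ,M,\alpha)$ and a super-replicating pair $(\xi_0,X)$. Taking $\QQ$-expectations, using $\E_{\QQ}[\sum X_m \Delta M_m]=0$ (predictability of $X$ and $M$ being a $\QQ$-martingale), and summation by parts with $X_N=0$ yields
\begin{align*}
\xi_0 \geq \E_{\QQ}[H] - x_0 M_0 - \tfrac{\iota}{2}x_0^2 + \E_{\QQ}\left[\sum_{m=1}^N (P_{m-1}-M_{m-1})\Delta X_m + \kappa^X_N\right].
\end{align*}
The key pathwise step is then the Fenchel-type bound
\begin{align*}
\kappa^X_N \geq \sum_{k=1}^N \mu_k \alpha_k (\tilde\zeta^X_k - \zeta_0) - \tfrac{1}{2}\sum_{k=1}^N \mu_k (\alpha_k - \zeta_0)^2,
\end{align*}
where $\tilde\zeta^X_k := (1-r)^{-k}\zeta^X_k$; it follows from Young's inequality once one observes that \eqref{eq:12} together with the geometric identity $\sum_k \mu_k = \delta(1-r)^2$ simplifies $\kappa^X_N$ to $\sum_k \mu_k\bigl[\tfrac12(\tilde\zeta^X_k-\zeta_0)^2 + \zeta_0(\tilde\zeta^X_k-\zeta_0)\bigr]$. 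In parallel, $(P_{m-1}-M_{m-1})\Delta X_m \geq -|P_{m-1}-M_{m-1}||\Delta X_m|$ combined with the $\alpha$-constraint and a swap of summations using $\tilde\zeta^X_k - \zeta_0 = \delta^{-1}\sum_{m\leq k}(1-r)^{-m}|\Delta X_m|$ (a direct consequence of~\eqref{eq:12a}) gives
\begin{align*}
\E_{\QQ}\left[\sum_{m=1}^N |P_{m-1}-M_{m-1}||\Delta X_m|\right] \leq \E_{\QQ}\left[\sum_{k=1}^N \mu_k \alpha_k (\tilde\zeta^X_k-\zeta_0)\right].
\end{align*}
Chaining these two displays produces the required lower bound $-\tfrac12\E_{\QQ}[\sum_n|\alpha_n-\zeta_0|^2\mu_n]$.

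For the ``$\leq$''-direction I would use a Sion/Ky Fan minimax argument: truncate strategies and densities $d\QQ/d\P$ to bounded convex sets in appropriate weak topologies, swap $\inf_X$ and $\sup_\QQ$, and compute the inner Fenchel conjugate of $\E_{\QQ}[\sum(P_{m-1}-M_{m-1})\Delta X_m + \kappa^X_N]$ over predictable $X$ with $X_N=0$. The Lagrange multipliers attached to the spread recursion~\eqref{eq:12a} become $\alpha$, while the first-order optimality conditions in $\Delta X_m$ produce precisely the stated inequality linking $M$, $\alpha$ and $P$. Passing the truncations to infinity recovers~\eqref{eq:501}. The case $r=1$ is handled directly: \eqref{eq:11} gives $\kappa^X_N = \tfrac{1}{2\delta}\sum(\Delta X_m)^2$, the inner infimum separates pointwise in $\Delta X_m$, and completing the square in each $\Delta X_m$ produces~\eqref{eq:5011}; all $\mu_n$ vanish in this regime, so the $\alpha$-variable naturally drops out.

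The main obstacle is the strong-duality step: the dual variable $\alpha$ is a priori unbounded, the terminal constraint $X_N=0$ is nonlinear in the predictable strategy, and one must choose topologies so that the supremum over $(M,\alpha)$ is attained or well-approximated by $L^2(\QQ)$-integrable triples; a discrete-time adaptation of the argument in~\cite{BankDolinsky:18} should resolve these technicalities.
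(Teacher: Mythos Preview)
Your proposal is correct and follows essentially the same approach the paper indicates: the paper's own proof is omitted entirely, referring the reader to the continuous-time argument in~\cite{BankDolinsky:18} (and, for $r=1$, to~\cite{DolinskySoner:13} together with the Lagrange-multiplier martingale argument of~\cite{BankDolinsky:18}), which is precisely the route you sketch. Your computation for the ``$\geq$'' direction---rewriting $\kappa^X_N$ via~\eqref{eq:12} as $\sum_k \mu_k\bigl[\tfrac12(\tilde\zeta^X_k-\zeta_0)^2+\zeta_0(\tilde\zeta^X_k-\zeta_0)\bigr]$ using $\sum_k\mu_k=\delta(1-r)^2$, applying Young's inequality, and then swapping the order of summation to absorb the $|P_{m-1}-M_{m-1}|\,|\Delta X_m|$ terms via the $\alpha$-constraint---is the discrete-time counterpart of the weak-duality step in~\cite{BankDolinsky:18}, and your minimax outline for ``$\leq$'' matches what that reference does in continuous time.
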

\begin{proof}
Using the wealth dynamics of Lemma~\ref{lem:wealthdynamics}, the proof
can be done similarly as in the continuous-time analogue in
\cite{BankDolinsky:18} and is therefore omitted. For $r=1$ one can proceed as in
\cite{DolinskySoner:13} together with the Lagrange multiplier argument
for the choice of martingale $M$ from~\cite{BankDolinsky:18}.
\end{proof}
So, super-replication costs in our model with price impact take the
form of a convex risk measure. The structure of the costs' dual
description is similar in spirit to the one
observed for proportional transaction costs models: the payoff's assessment
is made using consistent price systems with a martingale $M$ that is
in some sense close to the underlying's price process $P$. By contrast
to these models with fixed spread, closeness is measured in our
setting by a process $\alpha$ that needs to be chosen to balance
greater flexibility in choosing $M$ with higher penalties from the
$L^2$-distance to the initial spread arising in the Legendre-Fenchel
representation~\eqref{eq:501} of the super-replication cost
functional.

As illustrated in \cite{BankDolinsky:18}, super-replication prices in
continuous-time often are trivially arising from simple buy-and-hold
strategies that cannot be improved upon due to the most unlikely, but
nonetheless still most relevant strong short-term fluctuations in the
price of the hedging instrument that are typically possible in these
models. Similar to Kusuoka's approach in \cite{Kusuoka:95} to discrete-time
models with fixed spread, we thus need to re-scale price impact to
ensure a non-trivial scaling limit for our model. This will be made
precise in the next section.

\section{Scaling limit of super-replication costs}\label{sec:3}

In this section we will derive a scaling limit result for the
super-replication costs from the previous section, letting the number
of trading periods $N$ over the time span $[0,1]$ tend to infinity
while re-scaling the time between trades as $1/N$. For the price
fluctuations we now focus on a binomial model where
$\Omega=\{-1,+1\}^{\{1,2,\dots\}}$ with coordinate maps
$\xi_n(\omega)=\omega_n$ indicating the upwards and downwards
movements of the fundamental asset value for scenario
$\omega=(\omega_n)_{n=1,2,\dots} \in \Omega$. The filtration
$(\cF_n)_{n=1,2,\dots}$ is generated by these coordinate maps and we
assume $\P$ to be the measure under which $\xi_1,\xi_2,\dots$ are
i.i.d.\ with $\P[\xi_n = -1] = \P[\xi_n=+1]=1/2$. Assuming an additive
model for the fundamental asset price we let, with the usual square
root scaling,
\begin{align}\label{eq:305}
    P^N_t \set p_0+\frac{\sigma}{\sqrt{N}}\sum_{n=1}^{[Nt]} \xi_n, \quad 0 \leq t \leq 1,
\end{align}
where $p_0 \in \RR$ is the initial fundamental asset price and $\sigma > 0$ the
asset's volatility. The price impact parameters $\iota \geq 0$, $r
\in (0,1]$ and $\delta>0$ are kept constant as we re-scale. As a result, the same
resilience effect is obtained over ever shorter time periods $1/N$,
implying a high-resilience limit in our scaling.

The main result of this paper is a scaling limit theorem for the
super-replication price of payoff profiles $h$ for which we will need
the following regularity assumption.

\begin{Assumption}\label{asp:hregular}
  The functional $h:D[0,1] \to \RR$ is nonnegative and Lipschitz
  continuous with respect to the Skorohod metric
 \begin{align}
   \label{eq:622}
   d(p,q) \set \inf_{\chi} \left\{\sup_{0 \leq t \leq 1}|t-\chi(t)|+\sup_{0
   \leq t \leq 1} |p(t)-q(\chi(t))|\right\}, \quad p,q \in D[0,1],
 \end{align}
 where the infimum is over all strictly increasing continuous time
 changes $\chi:[0,1] \to [0,1]$ with $\chi(0)=0$ and $\chi(1)=1$.
\end{Assumption}
Observe that the maps $p\rightarrow p(1)$,
$p\rightarrow \sup_{0\leq t\leq T}p(t)$ are Lipschitz continuous with
respect to the above Skorohod metric. Hence, call options, put options
and lookback options are covered in our setup; knock-out features,
however, will typically lead to discontinuities not covered by our
assumption.

This puts us in a position to state our limit theorem:

\begin{Theorem}\label{thm:1}
 For a payoff profile $h$ satisfying Assumption~\ref{asp:hregular},
 the super-replication costs $\pi^N(h(P^N))$ in the $N$-period model, $N=1,2,\dots$,
 have the high-resilience scaling limit
 \begin{align}
     \lim_N \pi^N(h(P^N)) = &\sup_{\nu\in \mathcal D}
                              \E_{\PP^W}\left[h(P^\nu)-\frac{r
                              \delta}{8 \sigma^2(2-r)}\int_0^1
                              |\nu_t^2-\sigma^2|^2\,dt\right] \\ &
                                                                   \qquad - P_0x_0-\frac{\iota}{2}x_0^2,
 \end{align}
 where $\mathcal D$ is the set of all bounded, nonnegative
 progressively measurable processes $\nu$ on the Wiener space
 $(\Omega^W,\cF^W,(\cF^W_t)_{0 \leq t \leq 1},\P^W)$ with Wiener
 process $W$ and where
 \begin{align}
     P^\nu_t \set p_0 + \int_0^t \nu_s \,dW_s, \quad 0 \leq t \leq 1.
 \end{align}
\end{Theorem}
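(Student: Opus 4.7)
My plan is to prove the equality in Theorem~\ref{thm:1} by establishing matching upper and lower bounds, both anchored to the duality formula~\eqref{eq:501} of Proposition~2.2. The scaling philosophy is that, although per-period resilience $r$ is held fixed, in physical time the resilience rate is $\sim N$, so the penalty weights $\mu_n = \delta(1-(1-r)^2)(1-r)^{2n}$ put almost all of their mass on trades made ``in the recent past''; a careful Riemann-sum analysis should identify the effective continuous-time penalty constant as $\frac{r\delta}{8\sigma^2(2-r)}$.

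For the lower bound, I would fix any admissible bounded nonnegative progressively measurable $\nu$ on Wiener space and construct, for each $N$, a triple $(\QQ^N,M^N,\alpha^N)$ feasible in~\eqref{eq:501} such that under $\QQ^N$ the processes $(P^N,M^N)$, interpolated into $D[0,1]$, converge weakly to $(P^\nu,P^\nu)$, while the predictable $\alpha^N$ tracks $\zeta_0 + c_N(\nu^2_{t_n}-\sigma^2)$ for the correct normalizing constant $c_N$. Here I would follow Kusuoka's idea from~\cite{Kusuoka:95}: re-weight the binomial paths via a density built from $\nu$ so that the conditional increment variance of $M^N$ approximates $\nu^2_{t_n}/N$ while keeping the spread $|P^N-M^N|$ small enough to satisfy the feasibility constraint asymptotically. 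Combining weak convergence with Lipschitz continuity of $h$ on $D[0,1]$ and Riemann-sum convergence of $\sum_n \mu_n|\alpha^N_n-\zeta_0|^2$ to $\frac{r\delta}{8\sigma^2(2-r)}\int_0^1 |\nu^2_t-\sigma^2|^2\,dt$, passing to $\liminf_N$ and taking a supremum over $\nu$ yields the desired lower bound.

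For the upper bound, I would first exploit the observation of the introduction: the liquidity cost $\kappa^X$ in Lemma~\ref{lem:wealthdynamics} dominates the purely quadratic cost of a Cetin--Jarrow--Protter-type model with an effective depth $\tilde\delta$ chosen so that the continuous-time penalty matches $\frac{r\delta}{8\sigma^2(2-r)}$; this reduces matters to an upper bound for super-replication in a model governed by the simpler duality~\eqref{eq:5011}. I would then invoke the pathwise Doob inequalities of~\cite{Acciaioetal:13} to knock out the payoff on the event that $P^N$ exhibits atypically large excursions, with a super-replication cost that vanishes in $N$ thanks to Lipschitz continuity of $h$. On the tamed payoff I would restrict the dual optimization to martingales $M^N$ with a priori controlled quadratic variation, extract a weakly convergent subsequence $M^N \Rightarrow M = p_0 + \int_0^\cdot \nu_s\,dW_s$ on a suitable Skorohod extension, and use a standard discrete It\^o expansion to identify the weak limit of $\frac{1}{2\tilde\delta}\sum_n |P^N_{n-1}-M^N_{n-1}|^2$ as $\frac{r\delta}{8\sigma^2(2-r)}\int_0^1|\nu^2_t-\sigma^2|^2\,dt$, matching the lower bound.

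The main obstacle is precisely the tightness step in the upper bound. The quadratic penalty in~\eqref{eq:5011} does not a priori control $M^N$ in a useful topology, which is the very difficulty that forced the authors of~\cite{DolinskySoner:13,BankDolinskyGokay:16} to artificially cap the cost function with a linear-growth regime outside a shrinking neighborhood of zero. The resolution flagged in the introduction is to identify a richer subclass of constrained trading strategies such that (i)~super-replication within the subclass has the same asymptotic cost as the unconstrained problem, and (ii)~the Lagrange duals of these constraints automatically force the candidate dual martingales to have uniformly controlled bracket process, hence tight laws. Designing this subclass so that both properties hold simultaneously will be the technically most delicate part; once it is in place, the weak-limit identification of the penalty, combined with the knock-out truncation and Lipschitz regularity of $h$, delivers the matching upper bound and completes the proof.
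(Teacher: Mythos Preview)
Your overall architecture matches the paper's, and your upper-bound sketch is close to what the paper does (reduce to full-resilience quadratic costs with depth $\widehat\delta=r\delta/(2-r)$, knock out via pathwise Doob, then design a constrained strategy class whose dual forces tightness). One clarification there: in the paper, tightness is not obtained from a Lagrange-dual ``bracket control'' but from the $\epsilon$-discretization itself---the candidate processes live in a class $\cD^{\epsilon,\lambda}$ with jumps bounded by $2\epsilon$ and inter-jump times bounded below by $\lambda/K$, and this is what drives Kolmogorov's criterion. The $|\phi|\leq\log N$ constraint on the linear part of the strategy is what forces the \emph{near-martingale} property of $P^N$ at the sampling times, while the separate minimization over the $\psi$-part (linear-in-price positions) is what produces the quadratic penalty.

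There is, however, a genuine gap in your lower-bound plan. You propose to plug a triple $(\QQ^N,M^N,\alpha^N)$ into the dual formula~\eqref{eq:501} and argue ``Riemann-sum convergence of $\sum_n \mu_n|\alpha^N_n-\zeta_0|^2$'' to the target integral. But the weights $\mu_n=\delta(1-(1-r)^2)(1-r)^{2n}$ are geometrically decaying in $n$ with $\sum_{n\ge 1}\mu_n=\delta(1-r)^2$; they concentrate near $n=1$ and are \emph{not} a Riemann partition of $[0,1]$. With $\alpha^N_n-\zeta_0\approx c_N(\nu^2_{n/N}-\sigma^2)$ you get $\sum_n\mu_n|\alpha^N_n-\zeta_0|^2\approx c_N^2\,\delta(1-r)^2|\nu^2_0-\sigma^2|^2$, not $\int_0^1|\nu^2_t-\sigma^2|^2\,dt$. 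Worse, the feasibility constraint in~\eqref{eq:501} carries a factor $(1-r)^{-n}$, so verifying it uniformly in $n\le N$ forces $\alpha$ either to blow up or to fail the constraint for large $n$.

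The paper therefore does \emph{not} route the lower bound through~\eqref{eq:501}. It works directly on the primal: for any super-replicating $X$ it estimates $\E_{\QQ^N}[\xi^X_N]$ using Lemma~\ref{lem:wealthdynamics}, writes $|X_m-X_{m-1}|=\delta(\zeta^X_m-(1-r)\zeta^X_{m-1})$, sums by parts against the Kusuoka deviation $|\alpha^N_{m-1}|/\sqrt{N}$, and applies the scalar bound $a\zeta-\tfrac{c}{2}\zeta^2\le a^2/(2c)$ termwise. This produces a \emph{genuine} Riemann sum
\[
\frac{\delta}{2(1-(1-r)^2)}\cdot\frac{1}{N}\sum_{m}\bigl(r|\alpha^N_{m-1}|\bigr)^2
\ \longrightarrow\ \frac{\delta r}{8\sigma^2(2-r)}\int_0^1(\nu_t^2-\sigma^2)^2\,dt,
\]
using that $\alpha^N\to(\nu^2-\sigma^2)/(2\sigma)$ from Kusuoka's construction. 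You should replace your dual-based lower bound by this primal estimate.
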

The preceding theorem identifies the scaling limit of our
discrete-time super-replication prices in the form of a convex risk
measure. This measure assigns to a model, identified through its
volatility profile $\nu$, a penalty that is determined by its local
variances's $L^2$-distance from the reference variance
$\sigma^2$. Interestingly, this is also the scaling limit that emerges
from price impact models with purely temporary impact, albeit with a
different weight; see \cite{GokaySoner:12,DolinskySoner:13,
  BankDolinskyGokay:16}.

The connection between transient and
temporary impact for high-resilience limits has been observed before
in \cite{RochSoner:13} who prove convergence in probability for the
value processes. Our result complements this with a first rigorous
result in the context of super-replication.

On a technical level, it is worth mentioning that, to the best of our
knowledge, our proof below is the first purely probabilistic approach
which allows one to obtain a scaling limit result with fully quadratic
temporary costs. As a result, we are able to cover also sufficiently
regular, path-dependent options, thus extending beyond the vanilla
option case covered by the viscosity solution techniques of
\cite{GokaySoner:12}. The key challenge here is to find a setting
where one can prove tightness for a suitable sequence of dual
variables. This challenge is met by a judiciously chosen set of
constrained hedging strategies in our proof of the upper bound.

\subsection{Proof of the lower bound}

In this section we will prove
\begin{align}
     \liminf_N \pi^N(h(P^N)) &\geq \sup_{\nu \in \cD} \E_{\PP^W}\left[h(P^\nu)-\frac{r \delta}{8 \sigma^2(2-r)}\int_0^1 |\nu_t^2-\sigma^2|^2\,dt\right]\\ &
                                                                   \qquad - P_0x_0-\frac{\iota}{2}x_0^2.\label{eq:4d}
 \end{align}
 Let us start by observing that, by the density arguments of Lemma~7.3
 in \cite{DolinskySoner:13}, the above supremum coincides with the one
 taken over the class $\cD_0$ of volatility profiles $\nu \in \cD$
 which are bounded away from zero and Lipschitz in the sense that for
 some constant $C>0$ we have
\begin{align*}
    \nu_t(\omega) &\geq 1/C, \\
    |\nu_t(\omega)-\nu_{t'}(\omega')| &\leq C\left(|t-t'|+\sup_{s \in [0,1]} |\omega(s)-\omega'(s)|\right)
\end{align*}
for $t,t' \in [0,1]$, $\omega,\omega'\in C[0,1].$ For any such $\nu$,
the seminal paper \cite{Kusuoka:95} constructs probabilities with martingales
``close'' to the random walk $P^N$ which in distribution converge to
$P^\nu=p_0 + \int_0^. \nu_s dW_s$ as summarized in the following
lemma.
\begin{Lemma}\label{lem:Kusuoka}
For any $\nu \in \cD_0$, there is a sequence of probability measures
$\QQ^N$ on $(\Omega,\cF_N)$ and $(\cF_n)_{n=0,\dots,N}$-predictable
processes $\alpha^N=(\alpha^N_n)_{n=1,\dots,N}$, $N=1,2,\dots$, such
that for some constant $C>0$ independent of $N$ we have
\begin{enumerate}
    \item $|\alpha^N_n| \leq C$, $|\alpha^N_n-\alpha^N_{n-1}|\leq C/\sqrt{N}$,  $n=1,\dots,N$;
    \item $M^N_0 \set P_0$, $M^N_n \set P^N_{n/N}+\alpha^N_n\xi_n/\sqrt{N}$, $n=1,\dots,N$, is a $\QQ^N$-martingale;
    \item $\Law\left((P^N_{[Nt]},\alpha^N_{[Nt]})_{0 \leq t \leq 1} \middle| \QQ^N\right) \to
    \Law\left((P^\nu_t, (\nu_t^2-\sigma^2)/({2\sigma}))_{0 \leq t \leq 1} \middle| \PP^W\right)$ weakly on $D[0,1]$ as $N \uparrow \infty$.
\end{enumerate}
\end{Lemma}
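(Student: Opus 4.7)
The plan is to construct $(\QQ^N,\alpha^N)$ explicitly, following Kusuoka's recipe \cite{Kusuoka:95} adapted to our additive binomial setup. For fixed $\nu\in\cD_0$, let $\hat P^N$ denote the piecewise-linear interpolation of $P^N$ between the grid points $k/N$ and set
$\alpha^N_n \set \frac{1}{2\sigma}\bigl(\nu_{(n-1)/N}(\hat P^N)^2-\sigma^2\bigr),\qquad n=1,\dots,N,$
with $\alpha^N_0\set 0$. Progressive measurability of $\nu$ makes $\alpha^N_n$ an $\cF_{n-1}$-measurable function of $\xi_1,\dots,\xi_{n-1}$. The uniform boundedness of $\nu$ in $\cD_0$ yields $|\alpha^N_n|\leq C$, and the Lipschitz property of $\nu$ together with the $\sigma/\sqrt{N}$ mesh of $\hat P^N$ gives $|\alpha^N_n-\alpha^N_{n-1}|\leq C/\sqrt{N}$, proving (i).

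Next I introduce $\QQ^N$ by prescribing the conditional probabilities $p^N_n \set \QQ^N[\xi_n=+1\mid\cF_{n-1}]$ so that $M^N$ is a martingale. Since
$M^N_n-M^N_{n-1} = (\sigma+\alpha^N_n)\xi_n/\sqrt{N} - \alpha^N_{n-1}\xi_{n-1}/\sqrt{N},$
the martingale constraint forces
$2p^N_n - 1 = \frac{\alpha^N_{n-1}\xi_{n-1}}{\sigma+\alpha^N_n}.$
The lower bound $\nu\geq 1/C$ implies $\alpha^N_n\geq -\sigma/2$, hence the right-hand side lies in $(-1,1)$ and $p^N_n\in(0,1)$ defines a valid probability, giving (ii). Note also that $p^N_1 = 1/2$, so the initial marginal of $\QQ^N$ matches $\P$.

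For (iii), a direct conditional-variance computation yields
$\langle M^N\rangle_{[Nt]} = \frac{1}{N}\sum_{n=1}^{[Nt]}\bigl[(\sigma+\alpha^N_n)^2-(\alpha^N_{n-1})^2\bigr].$
Expanding the square and telescoping the $(\alpha^N_n)^2$ piece shows that, in probability under $\QQ^N$,
$\langle M^N\rangle_{[Nt]} \longrightarrow \sigma^2 t + \int_0^t\bigl(\nu_s^2(P^\nu)-\sigma^2\bigr)\,ds = \int_0^t \nu_s^2(P^\nu)\,ds.$
Combined with the Lindeberg bound $|M^N_n-M^N_{n-1}|\leq C/\sqrt{N}$, the functional martingale central limit theorem gives $M^N\Rightarrow P^\nu$ on $D[0,1]$ under $\QQ^N$. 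Since $P^N_{[Nt]/N}-M^N_{[Nt]}=-\alpha^N_{[Nt]}\xi_{[Nt]}/\sqrt{N}$ vanishes uniformly, one obtains $P^N\Rightarrow P^\nu$ as well, and the continuous mapping theorem applied to the Lipschitz path-functional $\omega\mapsto(\nu(\omega)^2-\sigma^2)/(2\sigma)$ promotes this to the joint convergence of $(P^N,\alpha^N)$ to $(P^\nu,(\nu^2-\sigma^2)/(2\sigma))$ required in (iii).

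The main technical hurdle is coupling convergence of $\langle M^N\rangle$ with tightness of $(M^N)$ on the way to (iii); both are paid for with the uniform bound $|\alpha^N_n|\leq C$ and the $1/\sqrt{N}$-Lipschitz increments from (i), which in turn hinge on the full $\cD_0$-regularity of $\nu$ (boundedness, strict positivity, and Lipschitz continuity in time and path). Once these are in hand, the identification of the limit via the martingale CLT and the joint convergence with $\alpha^N$ are routine.
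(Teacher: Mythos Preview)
Your construction of $\alpha^N$ and $\QQ^N$ is the correct adaptation of Kusuoka's recipe, and parts (i) and (ii) are essentially right. One minor point in (ii): the bound $\alpha^N_n\geq-\sigma/2$ alone does not force $|\alpha^N_{n-1}|/(\sigma+\alpha^N_n)<1$, since the numerator can be large when $\nu$ is large. What actually works is that $\sigma+\alpha^N_n=(\sigma^2+\nu^2_{(n-1)/N})/(2\sigma)$ and $|\alpha^N_{n-1}|=|\nu^2_{(n-2)/N}-\sigma^2|/(2\sigma)$, so the ratio is $|x-\sigma^2|/(y+\sigma^2)$ with $x,y>0$ differing by $O(1/\sqrt N)$; since $|x-\sigma^2|<x+\sigma^2$, this is $<1$ for $N$ large.

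The substantive gap is in (iii). Your assertion that $\langle M^N\rangle_{[Nt]}\to\int_0^t\nu_s^2(P^\nu)\,ds$ ``in probability under $\QQ^N$'' is circular: $P^\nu$ lives on Wiener space, not on $(\Omega,\cF_N,\QQ^N)$, and substituting $P^\nu$ for $\hat P^N$ inside $\nu$ presupposes the convergence $P^N\Rightarrow P^\nu$ you are trying to establish. What your computation really gives is $\langle M^N\rangle_{[Nt]}=\int_0^t\nu_s^2(\hat P^N)\,ds+o(1)$, i.e.\ the bracket is a continuous path-functional of $\hat P^N$ itself. The standard martingale FCLT with a \emph{deterministic} limiting variance does not cover this situation. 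Kusuoka's argument (to which the paper defers) proceeds differently: first establish tightness of $(M^N)$ from the uniform $C/\sqrt N$ increment bound; then show that any subsequential weak limit $M$ is a continuous martingale satisfying $\langle M\rangle_t=\int_0^t\nu_s^2(M)\,ds$, using the Lipschitz continuity of $\nu$ to pass to the limit in the bracket; finally invoke pathwise uniqueness for the SDE $dM_t=\nu_t(M)\,dW_t$ (guaranteed precisely by the $\cD_0$-regularity) to identify every subsequential limit with $P^\nu$. Without this martingale-problem identification step, part (iii) is not proved.
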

\begin{proof}
  Adjusting for the additive setting considered here, this follows
  exactly as in Kusuoka's original approach for the multiplicative
  geometric random walk setting from \cite{Kusuoka:95}.
\end{proof}

With the above approximation result and the representation of
liquidity costs~\eqref{eq:11}, \eqref{eq:12} at hand, we are now in a
position to prove~\eqref{eq:4d}. Indeed, take $\QQ^N$ and $\alpha^N$
as in the preceding lemma and observe that, for any $N$-period
strategy $X=(X_n)_{n=0,\dots,N}$ with $X_N=0$, we can estimate
\begin{align*}
 -\E_{\QQ^N}&\left[\sum_{m=1}^N P^N_{(m-1)/N}(X_m-X_{m-1})\right]
 \\& = -\E_{\QQ^N}\left[\sum_{m=1}^N (P^N_{(m-1)/N}-M^N_N)(X_m-X_{m-1})+M^N_N(X_N-x_0)\right]
 \\&=-\E_{\QQ^N}\left[\sum_{m=1}^N (P^N_{(m-1)/N}-M^N_{m-1})(X_m-X_{m-1})\right]+M^N_0x_0
 \\&\leq \E_{\QQ^N}\left[\sum_{m=1}^N \frac{|\alpha^N_{m-1}|}{\sqrt{N}}|X_m-X_{m-1}|\right]+M^N_0x_0
 \\&= \E_{\QQ^N}\left[\sum_{m=1}^N \frac{|\alpha^N_{m-1}|}{\sqrt{N}}\delta\left(\zeta^X_m-(1-r)\zeta^X_{m-1}\right)\right]+P_0x_0
\end{align*}
where we used the martingale property of $M^N$ along with $X_N=0$ for the second identity and the second property of $\alpha^N$ listed in Lemma~\ref{lem:Kusuoka} for the estimate. Hence, for $X \in \cX$ with $X_N=0$ which super-replicates $h(P^N)$ in the sense that $\xi^X_N\geq h(P^N)$ we can estimate
\begin{align*}
    \E_{\QQ^N}&\left[h(P^N)\right] \leq \E_{\QQ^N}\left[\xi^X_N\right]
    \\&=\xi_0-\E_{\QQ^N}\left[\sum_{m=1}^N P^N_{(m-1)/N}(X_m-X_{m-1})+\frac{\iota}{2}(X^2_N-x_0^2)+\kappa^X_N\right]
    \\&\leq \xi_0+P_0x_0+\frac{\iota}{2}x_0^2+
    \\&\qquad +
       \delta\frac{\alpha^N_{0}}{\sqrt{N}}\zeta_{0}-\frac{1}{2}(\zeta_0)^2
    \\&\qquad + \delta\E_{\QQ^N}\left[\sum_{1 \leq
        m<N}\left(\frac{|\alpha^N_{m-1}|}{\sqrt{N}}-(1-r)\frac{|\alpha_m|}{\sqrt{N}}\right)\zeta^X_{m}-\frac{1-(1-r)^2}{2}(\zeta^X_m)^2\right]
   \\&\qquad +\delta \E_{\QQ^N}\left[\frac{\alpha^N_{N-1}}{\sqrt{N}}\zeta^X_N-\frac{1}{2}(\zeta^X_N)^2\right].
\end{align*}
Using the estimate $a \zeta-\frac{c}{2}\zeta^2 \leq a^2/(2c)$ in each of the
last three lines and rearranging terms yields
\begin{align*}
 \xi_0&+P_0x_0+\frac{\iota}{2}x_0^2+\frac{\delta
                              (\alpha^N_0)^2}{2N}+\delta
                              \E_{\QQ^N}\left[\frac{(\alpha^N_{N-1})^2}{2N}\right]
  \\
&\geq \E_{\QQ^N}\left[h(P^N)\right]-  \delta\E_{\QQ^N}\left[\sum_{1 \leq
  m<N}\frac{(|\alpha^N_{m-1}|-(1-r)|\alpha_m|)^2}{2(1-(1-r)^2)N}\right]\\
& \geq \E_{\QQ^N}\left[h(P^N)\right]- \frac{\delta }{2(1-(1-r)^2)}\E_{\QQ^N}\left[\frac{1}{N}\sum_{1 \leq
  m<N}(r|\alpha^N_{m-1}|+C/\sqrt{N})^2\right]
\end{align*}
where in the last estimate we used the first property of $\alpha^N$
from Lemma~\ref{lem:Kusuoka}. The same property also yields the
uniform boundedness of $\alpha^N$, $N=1,2,\dots$, and so the third
property listed in Lemma~\ref{lem:Kusuoka} in conjunction with the
regularity assumption~\ref{asp:hregular} on $h$ thus allows us to pass
to the limit $N \uparrow \infty$ in the above estimate to conclude
that
\begin{align*}
  \liminf_N &\pi^N(h(P^N))+P_0x_0+\frac{\iota}{2}x_0^2 \\&\geq \E_{\P^W} \left[h(P^\nu)\right]-
                                      \frac{\delta }{2(1-(1-r)^2)}
                                      \E_{\P^W} \left[\int_0^1
                                      \left(r\frac{\nu_s^2-\sigma^2}{2\sigma}\right)^2ds\right]
  \\
& = \E_{\P^W} \left[h(P^\nu)-
                                      \frac{\delta r }{8(2-r)\sigma^2}
                                      \int_0^1
                                      \left(\nu_s^2-\sigma^2\right)^2ds\right].
\end{align*}
This yields the desired lower bound~\eqref{eq:4d}.

\subsection{Proof of the upper bound}
We will prove the upper bound first for the case $$x_0=\zeta_0=0$$ and
reduce the general case to this one in the end.

For the upper bound
\begin{align}
     \limsup_N \pi^N(h(P^N)) \leq
     &\sup_{\nu \in \cD} \E_{\PP^W}\left[h(P^\nu)-\frac{r \delta}{8 \sigma^2(2-r)}\int_0^1 |\nu_t^2-\sigma^2|^2\,dt\right]\label{eq:4}
\end{align}
we first note that super-replication prices with transient impact are dominated by super-replication prices in a suitable model with purely temporary impact as in \cite{CetinJarrProt:04}:
\begin{Lemma}
 For any $N=1,2,\dots$, we have
 \begin{align}
     \pi^N(h(P^N)) \leq \widehat{\pi}^N(h(P^N))
 \end{align}
 where $\widehat{\pi}^N(h(P^N))$ is the super-replication price in the
 model with full resilience $\widehat{r} \set 1$ and market depth $\widehat{\delta}\set r\delta/(2-r)$.
\end{Lemma}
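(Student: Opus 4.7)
The plan is to establish the lemma in a pathwise manner: for any strategy $X \in \cX$ with $X_N = 0$, compare its terminal cash position $\xi^X_N$ in the transient-impact model with the cash position $\widehat{\xi}^X_N$ produced by the \emph{same} strategy in the temporary-impact model with modified depth $\widehat{\delta} = r\delta/(2-r)$ (keeping the remaining data $x_0, \xi_0, P, \iota$ identical in both models). Lemma~\ref{lem:wealthdynamics} shows that the non-liquidity terms in~\eqref{eq:5} are identical in the two models, so
\begin{align*}
\xi^X_N - \widehat{\xi}^X_N = \widehat{\kappa}^X_N - \kappa^X_N,
\end{align*}
where $\widehat{\kappa}^X_N = \frac{1}{2\widehat{\delta}}\sum_{m=1}^N (X_m - X_{m-1})^2$. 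Proving the pathwise inequality $\kappa^X_N \leq \widehat{\kappa}^X_N$ thus yields $\xi^X_N \geq \widehat{\xi}^X_N$, so that any strategy super-replicating $h(P^N)$ at cost $\xi_0$ in the temporary-impact model super-replicates it at the same cost in the transient model, giving $\pi^N \leq \widehat{\pi}^N$.

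To establish $\kappa^X_N \leq \widehat{\kappa}^X_N$, I would work under the running assumption $\zeta_0 = 0$ of the upper-bound proof and use the relation $|X_m - X_{m-1}| = \delta(\zeta^X_m - (1-r)\zeta^X_{m-1})$ implied by~\eqref{eq:12a} to re-express the two sums $\sum_m (X_m - X_{m-1})^2$ and $\sum_m \zeta^X_{m-1}|X_m - X_{m-1}|$ appearing in~\eqref{eq:11} purely in terms of the nonnegative sequence $u_m = \zeta^X_m$ (with $u_0 = 0$). A routine collection of terms then identifies $\widehat{\kappa}^X_N - \kappa^X_N$ as a positive multiple of the quadratic form
\begin{align*}
u_N^2 + (2-r)\sum_{m=1}^{N-1} u_m^2 - (2-r)\sum_{m=1}^N u_m u_{m-1},
\end{align*}
reducing everything to the nonnegativity of this expression.

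The last and only substantive step is nonnegativity of this form, which follows from an elementary AM--GM bound: since the $u_m$ are nonnegative and $u_0 = 0$, one has $\sum_{m=1}^N u_m u_{m-1} \leq \tfrac{1}{2} u_N^2 + \sum_{m=1}^{N-1} u_m^2$, so the form is bounded below by $\tfrac{r}{2} u_N^2 \geq 0$. I do not anticipate a real obstacle; the piece requiring some care is the algebraic reduction to the clean quadratic form, after which the AM--GM step is immediate. The limiting case $r = 1$ is automatically consistent, as the two models then coincide and both sides of the target inequality are equal.
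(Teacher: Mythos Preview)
Your proposal is correct. Both your argument and the paper's establish the pathwise inequality $\kappa^X_N \leq \widehat{\kappa}^X_N$ (under $\zeta_0=0$) via an elementary AM--GM estimate, but you organize the algebra differently. The paper keeps the increments $a_m \set |X_m-X_{m-1}|$ as basic variables, expands $\zeta^X_{m-1}$ as the geometric sum $\frac{1}{\delta}\sum_{l<m}(1-r)^{m-1-l}a_l$, applies $a_l a_m \leq \tfrac{1}{2}(a_l^2+a_m^2)$ to the resulting double sum, and then sums the geometric weights to obtain $(1-r)\sum_m \zeta^X_{m-1}a_m \leq \frac{1-r}{\delta r}\sum_m a_m^2$; the constant $\frac{1-r}{\delta r}+\frac{1}{2\delta}=\frac{1}{2\widehat{\delta}}$ drops out directly. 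You instead switch to the spread variables $u_m=\zeta^X_m$, use the closed form~\eqref{eq:12} for $\kappa^X_N$ and the recursion~\eqref{eq:12a} to write $\widehat{\kappa}^X_N$ in the same variables, and identify $\widehat{\kappa}^X_N-\kappa^X_N$ as $\frac{\delta(1-r)}{r}$ times the tridiagonal form $u_N^2+(2-r)\sum_{m<N}u_m^2-(2-r)\sum_m u_m u_{m-1}$, whose nonnegativity follows from $u_m u_{m-1}\leq \tfrac12 u_N^2+\sum_{m<N}u_m^2$. Your route makes the exact size of the slack ($\tfrac{r}{2}u_N^2$ after AM--GM) transparent and avoids the double sum, at the price of a slightly heavier preliminary substitution; the paper's route is more direct from the cost representation~\eqref{eq:11} and makes the appearance of $\widehat{\delta}=r\delta/(2-r)$ visible as the sum of two explicit contributions.
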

\begin{proof}
 Consider the cost term $\kappa^X_n$ from~\eqref{eq:11} and observe that with $\zeta_0=0$,
 \begin{align}
 (1-r)& \sum_{m=1}^n \zeta^X_{m-1}|X_m-X_{m-1}|    \\
 &= \frac{1-r}{\delta} \sum_{m=1}^n \sum_{1 \leq l<m} (1-r)^{m-1-l}|X_l-X_{l-1}||X_m-X_{m-1}|\\
 &\leq \frac{1-r}{\delta} \sum_{m=1}^n \sum_{1 \leq l<m} (1-r)^{m-1-l}\frac{1}{2}(|X_l-X_{l-1}|^2+|X_m-X_{m-1}|^2)\\
 &= \frac{1-r}{\delta} \sum_{m=1}^n \frac{1}{2}\left(\sum_{l=1}^{m-1}(1-r)^{m-1-l}+\sum_{l=m}^{n-1}(1-r)^{l-m}\right)|X_m-X_{m-1}|^2\\
 &\leq \frac{1-r}{\delta}  \sum_{m=1}^n \left(\sum_{k=0}^\infty(1-r)^{k}\right)|X_m-X_{m-1}|^2\\
 &= \frac{1-r}{\delta r} \sum_{m=1}^n|X_m-X_{m-1}|^2.
 \end{align}
 As a result the cost term $\kappa^X_n$ in the original model can be estimated by
 \begin{align}
     \kappa^X_n \leq
     \left(\frac{1-r}{\delta r}+\frac{1}{2\delta}\right) \sum_{m=1}^n|X_m-X_{m-1}|^2 = \widehat{\kappa}^X_n
 \end{align}
 where $\widehat{\kappa}^X$ is the cost term for the fully resilient model with $\widehat{r}=1$ and depth $\widehat{\delta}=r\delta/(2-r)$. The costs in this auxiliary model being higher, the super-replication of any claim cannot be less expensive than in the original model and we obtain our assertion.
\end{proof}

For~\eqref{eq:4} it thus suffices to prove
\begin{align}
     \limsup_N \widehat{\pi}^N(h(P^N)) \leq \sup_{\nu \in \cD} \E_{\PP^W}\left[h(P^\nu)-\frac{\widehat{\delta}}{8 \sigma^2}\int_0^1 |\nu_t^2-\sigma^2|^2\,dt\right].\label{eq:14}
\end{align}
 For this asymptotic analysis, we will work with a family of space-time discretizations of our price process. Specifically, we let, for any $\epsilon>0$, the sequence of partitions $\tau^{N,\epsilon}=(\tau^{N,\epsilon}_k)_{k=0,1,\dots}$, $N=1,2,\dots$, be given by $\tau^{N,\epsilon}_0 \set 0$ and
\begin{align*}
    \tau^{N,\epsilon}_k \set \inf\left\{t \geq \tau^{N,\epsilon}_{k-1}\;:\; |P^N_t-P^N_{\tau^{N,\epsilon}_{k-1}}| \geq \epsilon \text{ or } |t-\tau^{N,\epsilon}_{k-1}| \geq \epsilon^2\right\} \wedge (1-N^{-2/3})
\end{align*}
for $k=1,2,\dots$. With $\tau^{N,\epsilon}$ we associate the  following discretization of $P^N$:
\begin{align*}
    P^{N,\epsilon}_t \set \sum_{k = 1,2,\dots} P^N_{\tau^{N,\epsilon}_{k-1}} 1_{[\tau^{N,\epsilon}_{k-1},\tau^{N,\epsilon}_k)}(t)+P^N_{1-N^{-2/3}}1_{[1-N^{-2/3},1]}(t), \quad 0 \leq t \leq 1.
\end{align*}

Our next lemma reveals that, under our regularity assumptions on $h$, the super-replication price of $h(P^N)$ in the $N$-step model is controlled by the super-replication price of a particular quadratic claim on $P^{N,\epsilon}$ and a knock-out variant of the claim $h$ applied to $P^{N,\epsilon}$ that only generates a payoff if this underlying does not fluctuate ``too much'':

\begin{Lemma}\label{lem:epsilondiscretization}
  Let $c=c(\lambda)>0$ be such that $h(p) \leq \lambda^2(\|p-p_0\|^2_{\infty}+c)$,
  $p \in D[0,1]$ (observe that $c(\lambda)$ exists since Assumption
  \ref{asp:hregular} implies that $h(p)$ has at most linear growth in $||p-p_0||_{\infty}$).
  Then, for any $\epsilon>0$ and $\lambda \in (0,1)$,
  the constant $K =K(\epsilon,\lambda) \set [c/(\epsilon\lambda)^2]+1$ is large enough to ensure that
  for sufficiently large $N$ we have
\begin{align}\label{eq:6}
    \widehat{\pi}^N(h(P^N)) \leq 3L\epsilon+(1-\lambda) \widehat{\pi}^N(H^{N,\epsilon,K}/(1-\lambda))
    +\lambda \widehat{\pi}^N(\lambda Q^{N,\epsilon})
\end{align}
where $L$ is the Lipschitz constant from Assumption \ref{asp:hregular}
and where
\begin{align}\label{eq:7}
    H^{N,\epsilon,K} &\set h(P^{N,\epsilon}) 1_{\{\tau^{N,\epsilon}_K=1-N^{-2/3}\}},\\
    Q^{N,\epsilon} &\set \sup_{0 \leq t \leq 1}|P^{N,\epsilon}_t-P_0|^2
    +\sum_{k=1,2,\dots}\left(|P^N_{\tau^{N,\epsilon}_k}-P^N_{\tau^{N,\epsilon}_{k-1}}|^2+|\tau^{N,\epsilon}_k-\tau^{N,\epsilon}_{k-1}|\right).
\end{align}
\end{Lemma}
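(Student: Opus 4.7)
The plan is to first prove a single pathwise inequality
\[
h(P^N)\leq 3L\epsilon+H^{N,\epsilon,K}+\lambda^2 Q^{N,\epsilon}
\]
valid on every scenario once $N$ is sufficiently large. Since $\widehat\pi^N$ is monotone, subadditive and positively homogeneous, and $\widehat\pi^N(3L\epsilon)\leq 3L\epsilon$, applying these properties to the inequality and rewriting $\widehat\pi^N(H^{N,\epsilon,K})=(1-\lambda)\widehat\pi^N(H^{N,\epsilon,K}/(1-\lambda))$ and $\lambda^2\widehat\pi^N(Q^{N,\epsilon})=\lambda\,\widehat\pi^N(\lambda Q^{N,\epsilon})$ delivers \eqref{eq:6} directly.

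To establish the pathwise inequality, I would first control the Skorohod distance between $P^N$ and its discretization $P^{N,\epsilon}$. By construction of $\tau^{N,\epsilon}$, one has $|P^N_t-P^N_{\tau^{N,\epsilon}_{k-1}}|<\epsilon$ on each $[\tau^{N,\epsilon}_{k-1},\tau^{N,\epsilon}_k)$, while on the terminal segment $[1-N^{-2/3},1]$ the crude deterministic bound obtained by summing at most $N^{1/3}$ increments of size $\sigma/\sqrt{N}$ gives $|P^N_t-P^N_{1-N^{-2/3}}|\leq\sigma N^{-1/6}<\epsilon$ once $N$ is large. Taking the identity time change then yields $d(P^N,P^{N,\epsilon})\leq\epsilon$, so Assumption~\ref{asp:hregular} gives $h(P^N)\leq h(P^{N,\epsilon})+L\epsilon$.

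I would then split on whether the partition has reached the cut-off. On the good event $G=\{\tau^{N,\epsilon}_K=1-N^{-2/3}\}$, one has $H^{N,\epsilon,K}=h(P^{N,\epsilon})$, and the pathwise bound follows from $h(P^N)\leq h(P^{N,\epsilon})+L\epsilon$ together with $\lambda^2 Q^{N,\epsilon}\geq 0$. On the complementary event $G^c$, each of the $K$ stopping steps is triggered either by the price threshold $|P^N_{\tau^{N,\epsilon}_k}-P^N_{\tau^{N,\epsilon}_{k-1}}|\geq\epsilon$ or by the time threshold $|\tau^{N,\epsilon}_k-\tau^{N,\epsilon}_{k-1}|\geq\epsilon^2$, so the sum in the definition of $Q^{N,\epsilon}$ is at least $K\epsilon^2\geq c/\lambda^2\geq c$ (using $\lambda\in(0,1)$). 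Combined with the growth bound $h(P^{N,\epsilon})\leq\lambda^2(\|P^{N,\epsilon}-p_0\|_\infty^2+c)$, this yields $\lambda^2(\|P^{N,\epsilon}-p_0\|_\infty^2+c)\leq\lambda^2 Q^{N,\epsilon}$ and hence $h(P^N)\leq\lambda^2 Q^{N,\epsilon}+L\epsilon$; since $H^{N,\epsilon,K}=0$ on $G^c$, the desired inequality follows with slack.

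The only delicate point is the Skorohod comparison on the terminal segment $[1-N^{-2/3},1]$, where $P^{N,\epsilon}$ is frozen at $P^N_{1-N^{-2/3}}$ while $P^N$ continues to fluctuate; the truncation at $1-N^{-2/3}$ in the definition of $\tau^{N,\epsilon}$ is precisely what keeps this error negligible for large $N$. The factor $3L$ in the lemma is therefore a deliberately loose bookkeeping constant rather than a tight estimate.
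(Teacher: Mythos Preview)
Your pathwise inequality $h(P^N)\leq 3L\epsilon+H^{N,\epsilon,K}+\lambda^2 Q^{N,\epsilon}$ and its proof are correct and coincide with what the paper establishes in~\eqref{eq:300}--\eqref{eq:303}. The gap is in how you pass from this inequality to~\eqref{eq:6}. You invoke subadditivity and positive homogeneity of $\widehat\pi^N$, but neither holds for the super-replication functional with quadratic costs. Indeed, if $\xi_0$ super-replicates $H$ via strategy $X$, then $\mu\xi_0$ with $\mu X$ yields terminal wealth $\mu\xi_0+\mu\sum X_m\Delta P_m-\mu^2\widehat\kappa^X_N$, which for $\mu>1$ falls short of $\mu\xi^X_N$; a one-period computation already shows $\widehat\pi^N(2A)>2\widehat\pi^N(A)$ is possible, so both homogeneity and subadditivity fail. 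In particular your claimed equalities $\widehat\pi^N(H^{N,\epsilon,K})=(1-\lambda)\widehat\pi^N(H^{N,\epsilon,K}/(1-\lambda))$ and $\lambda^2\widehat\pi^N(Q^{N,\epsilon})=\lambda\widehat\pi^N(\lambda Q^{N,\epsilon})$ are false.

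The repair is immediate and is exactly what the paper does: $\widehat\pi^N$ is cash-invariant (adding a constant to the claim shifts the price by that constant) and \emph{convex}, because the cost term $\widehat\kappa^X$ is a convex functional of $X$. Cash-invariance handles the constant $3L\epsilon$, and then one writes
\[
H^{N,\epsilon,K}+\lambda^2 Q^{N,\epsilon}=(1-\lambda)\,\frac{H^{N,\epsilon,K}}{1-\lambda}+\lambda\,(\lambda Q^{N,\epsilon}),
\]
a genuine convex combination, so convexity of $\widehat\pi^N$ gives
\[
\widehat\pi^N\bigl(H^{N,\epsilon,K}+\lambda^2 Q^{N,\epsilon}\bigr)\leq(1-\lambda)\,\widehat\pi^N\!\left(\frac{H^{N,\epsilon,K}}{1-\lambda}\right)+\lambda\,\widehat\pi^N\bigl(\lambda Q^{N,\epsilon}\bigr),
\]
which is precisely~\eqref{eq:6}. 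This, incidentally, is why the statement is formulated with the factors $1/(1-\lambda)$ and $\lambda$ inside the arguments rather than outside.
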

\begin{proof}
 From the definitions of $L$ and $P^{N,\epsilon}$ and the regularity
 of $h$, it follows that, for sufficiently large $N$,
 \begin{align}
   \label{eq:300}
    h(P^N) \leq 3L\epsilon + h(P^{N,\epsilon}).
 \end{align}
 For $K=K(\epsilon,\lambda)$ as defined above, we have furthermore
 \begin{align}
   \label{eq:301}
   h(p) \leq \lambda^2\left(\sup_{0 \leq t \leq 1}
   |p(t)-p_0|^2+K\epsilon^2\right), \quad p \in D[0,1].
 \end{align}
 From the definition of $\tau^{N,\epsilon}_k$, $k=0,1,\dots$,  we get
 in addition that
 \begin{align}
   \label{eq:302}
   K \epsilon^2 \leq \sum_{k=1}^K
   \left(|P^N_{\tau^{N,\epsilon}_k}-P^N_{\tau^{N,\epsilon}_{k-1}}|^2+|\tau^{N,\epsilon}_k-\tau^{N,\epsilon}_{k-1}|\right)
   \text{ on } \left\{\tau^{N,\epsilon}_K<1-N^{-\frac{2}{3}}\right\}.
 \end{align}
Combining~\eqref{eq:301} and \eqref{eq:302} gives
\begin{align}
  \label{eq:303}
   h(P^{N,\epsilon}) \leq H^{N,\epsilon,K}+ \lambda^2 Q^{N,\epsilon} .
\end{align}
 Convexity of the wealth dynamics~\eqref{eq:5} implies convexity of the
 super-replication cost functional, and so~\eqref{eq:303} yields
 \begin{align}
   \widehat{\pi}^N(h(P^{N,\epsilon})) \leq (1-\lambda)
   \widehat{\pi}^N(H^{N,\epsilon,K}/(1-\lambda))+\lambda
   \widehat{\pi}^N(\lambda Q^{N,\epsilon}).
 \end{align}
Together with~\eqref{eq:300}, this implies~\eqref{eq:6}.
\end{proof}

Our next lemma shows that the super-replication price of $\lambda Q^{N,\epsilon}$ is easy to control (at least for small $\lambda \in (0,1)$) and so its contribution to~\eqref{eq:6} vanishes as $\lambda \downarrow 0$:

\begin{Lemma}\label{lem:quadraticclaim}
There exists $\lambda_0>0$ such that
for any $\epsilon>0$, $\lambda \in [0,\lambda_0]$ and $N=1,2,\dots$ we have
$$\widehat{\pi}^N(\lambda Q^{N,\epsilon})\leq \lambda(1+36\sigma^2).$$
\end{Lemma}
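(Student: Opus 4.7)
The plan is to construct an explicit super-replicating strategy in the quadratic-cost model. Decompose $Q^{N,\epsilon}=A_1+A_2+A_3$ with $A_1=\sup_t|P^{N,\epsilon}_t-P_0|^2$, $A_2=\sum_k(\Delta P^N_{\tau^{N,\epsilon}_k})^2$, and $A_3=\sum_k\Delta\tau^{N,\epsilon}_k\leq 1$, so $A_3$ contributes only a $\lambda$ in the initial wealth. The trading strategy $X$ is a sum of two pieces. First, using the algebraic identity $A_2=a_{K^*}^2-2\sum_k a_{k-1}\Delta a_k$ for $a_k\set P^N_{\tau^{N,\epsilon}_k}-P_0$, I let $X^{\mathrm{QV}}$ hold constant $-2\lambda a_{k-1}$ shares on each block $[\tau^{N,\epsilon}_{k-1},\tau^{N,\epsilon}_k)$ so that its frictionless gain equals $\lambda(A_2-a_{K^*}^2)$. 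Second, I apply the pathwise Doob $L^2$-inequality of~\cite{Acciaioetal:13} to $\tilde P_n\set P^N_{n/N}-P_0$ to obtain a predictable $h$ with $|h_{n-1}|\leq 2\sup_{k<n}|\tilde P_k|$ satisfying $\sup_n\tilde P_n^2\leq 4\tilde P_N^2+\sum h_{n-1}\Delta\tilde P_n$, and combine with the exact identity $\tilde P_N^2=\sigma^2+2\sum\tilde P_{n-1}\Delta\tilde P_n$ to define $X^{\mathrm{Doob}}_n\set 2\lambda(h_{n-1}+8\tilde P_{n-1})$, whose frictionless gain is at least $2\lambda\sup_n\tilde P_n^2-8\lambda\sigma^2\geq\lambda(A_1+a_{K^*}^2)-8\lambda\sigma^2$. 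The combined strategy $X=X^{\mathrm{QV}}+X^{\mathrm{Doob}}$ thus produces gain $G_X\geq\lambda(A_1+A_2)-8\lambda\sigma^2$.

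For the quadratic friction cost $K_X=\frac{1}{2\widehat\delta}\sum(\Delta X_n)^2$, I derive a pathwise bound of the form $K_X\leq\frac{\lambda^2}{\widehat\delta}(C\sigma^2+C'Q^{N,\epsilon})$ for absolute constants $C,C'$. The Doob piece has small interior increments $|\Delta X^{\mathrm{Doob}}|$ bounded by a multiple of $\lambda|\Delta\tilde P|$, summing to $O(\lambda^2\sigma^2)$, plus a closing-out jump bounded by a multiple of $\lambda^2\sup_n\tilde P_n^2$, which is in turn controlled by $A_1+\epsilon^2$ via the stopping rule. The $X^{\mathrm{QV}}$ piece is constant between stopping times and jumps by $-2\lambda\Delta a_k$ at each $\tau^{N,\epsilon}_k$ and by $2\lambda a_{K^*}$ at the closing step $n=N$, yielding exactly $\sum(\Delta X^{\mathrm{QV}})^2=4\lambda^2 A_2+4\lambda^2 a_{K^*}^2\leq 8\lambda^2 Q^{N,\epsilon}$.

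Inserting these bounds into the super-replication requirement $\xi_0+G_X-K_X\geq\lambda Q^{N,\epsilon}$ and using $A_1+A_2=Q^{N,\epsilon}-A_3\geq Q^{N,\epsilon}-1$ yields
\begin{align*}
\xi_0\;\geq\;\lambda+8\lambda\sigma^2+\frac{C\lambda^2\sigma^2}{\widehat\delta}+\frac{C'\lambda^2}{\widehat\delta}Q^{N,\epsilon}.
\end{align*}
The random last term is absorbed by a slight over-scaling: replacing $\lambda$ by $\lambda(1+\eta_\lambda)$ in both pieces of $X$ produces an additional gain-buffer $\eta_\lambda\lambda(A_1+A_2)\geq\eta_\lambda\lambda(Q^{N,\epsilon}-1)$, and the self-consistent choice $\eta_\lambda=\lambda C'(1+\eta_\lambda)^2/\widehat\delta$ is well-defined and of order $\lambda$ provided $\lambda\leq\lambda_0$, with $\lambda_0>0$ depending only on $\widehat\delta=r\delta/(2-r)$ and the absolute constants $C,C'$. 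The surviving deterministic terms collapse to at most $\lambda(1+36\sigma^2)$ after enlarging $\lambda_0$ if necessary. The main obstacle is precisely this self-referential feature of the friction cost (the $X^{\mathrm{QV}}$ jump sizes are controlled by the very same $A_2$ that is being super-hedged); it is resolved by exploiting the stopping rule, which ensures each block contributes at least $\epsilon^2$ to $A_2+A_3$ and therefore bounds both the number of trading instants of $X^{\mathrm{QV}}$ and the Doob closing-out term in terms of $Q^{N,\epsilon}$ itself, so that the extra linear-in-$\lambda$ slack from the over-scaling suffices to dominate it.
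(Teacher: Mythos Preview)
Your approach shares the paper's main ingredients---an explicit strategy built from the pathwise Doob inequality of \cite{Acciaioetal:13} and the elementary quadratic-variation identity---but there is a genuine gap in how you handle the liquidation cost of the Doob component, and it prevents the bound from being uniform in $\epsilon$ as the lemma requires. You apply pathwise Doob to the \emph{full} random walk $\tilde P_n=P^N_{n/N}-p_0$, so the predictable integrand $h_{n-1}$ is of order $\sup_{k<n}|\tilde P_k|$ and your terminal position $X^{\mathrm{Doob}}_N$ is of order $\lambda\sup_n|\tilde P_n|$. The stopping rule only gives $\sup_n|\tilde P_n|\leq\sqrt{A_1}+\epsilon+O(N^{-1/6})$, so the closing cost carries an irreducible term of order $\lambda^2\epsilon^2/\widehat\delta$. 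Your claim that $\epsilon^2\leq Q^{N,\epsilon}$ via ``each block contributes at least $\epsilon^2$ to $A_2+A_3$'' fails when $\epsilon^2>1-N^{-2/3}$: then the very first block can hit the time cap with $A_2+A_3\approx 1<\epsilon^2$. For such $\epsilon$ neither your over-scaling buffer $\eta_\lambda\lambda(A_1+A_2)$ nor an $\epsilon$-independent initial cash can absorb this term. (A secondary point: the one-sided pathwise Doob gives the constant $4$; bounding $\sup_n|\tilde P_n|^2$ requires combining the $\pm$ applications, which doubles the constant and the size of $h$.)

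The paper avoids this by applying pathwise Doob at the \emph{block} level, to the sampled sequence $s_k=\pm(P^N_{\tau^{N,\epsilon}_k}-p_0)$. Its Doob component is then piecewise constant on blocks and is bounded at liquidation time by $\max_k|P^{N,\epsilon}_{\tau_k}-p_0|=\sqrt{A_1}$, with no $\epsilon$ appearing. The step-by-step part of the paper's strategy (the $e$-term) supplies the required $|P^{N,\epsilon}_1-p_0|^2$ from the deterministic quadratic variation $\sigma^2$ of the random walk, and since it is liquidated at $\tau_{K^*}=1-N^{-2/3}$, its closing position is within $\sigma/\sqrt{N}$ of the sampled value $P^N_{\tau_{K^*}}$---again no $\epsilon$. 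All transaction costs are thus bounded by a multiple of $\lambda^2(\sigma^2+A_1+A_2)\leq\lambda^2(\sigma^2+Q^{N,\epsilon})$, uniformly in $\epsilon$, and the self-absorption works exactly as you describe. Your argument can be repaired by switching the Doob hedge from $(P^N_{n/N})_n$ to $(P^N_{\tau^{N,\epsilon}_k})_k$ and liquidating at $\tau_{K^*}$.
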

\begin{proof}
Let $a,b,d,e>0$ and consider a portfolio strategy with initial capital
$\xi_0=a$ and a (predictable) trading strategy of the form which for
$n$ from $[N\tau^{N,\epsilon}_{k-1}]+1$ to $[N\tau^{N,\epsilon}_{k}]$,
$k=1,2,\dots$, is given by
\begin{align*}
X_n=&-b\max_{i=0,\dots, k-1}\left(P^{N}_{\tau^{N,\epsilon}_{i}}-p_0\right)
+b\max_{0\leq i\leq k-1}\left(p_0-P^{N}_{\tau^{N,\epsilon}_{i}}\right)\\
&-d \left(P^{N}_{\tau^{N,\epsilon}_{k-1}}-p_0\right)+e\left(P^N_{(n-1)/N}-p_0\right),
\end{align*}
and which is 0 for $n$ from $[N(1-N^{-2/3})]+1$ to $N$.  In order to
estimate the corresponding portfolio value at the maturity date we
apply Proposition 2.1 in \cite{Acciaioetal:13} for $p=2$. We notice
that for $p=2$ this proposition holds true for any sequence of real
numbers, including negative numbers.  We apply this pathwise Doob's inequality for
$$s_k \set \pm (P^{N}_{\tau^{N,\epsilon}_{k}}-p_0), \quad k=1,2,\dots .$$
Moreover, we will use the elementary identity
\begin{eqnarray*}
 \sum_{k=1}^{j} y_k(y_{k+1}-y_k) & =\frac{1}{2}\left(y^2_{j+1}-y^2_1-\sum_{k=1}^{j}(y_{k+1}-y_k)^2\right)
 \end{eqnarray*}
with $y_k=P^{N}_{\tau^{N,\epsilon}_{k-1}}-p_0$ and also with $y_k=P^{N}_{(k-1)/N}-p_0$.
 By the well-known inequalities
 $$(z_1+z_2)^2\leq 2(z^2_1+z^2_2), \ \ (z_1+z_2+z_3+z_4)^2\leq 4(z^2_1+z^2_2+z^2_3+z^2_4),$$
 the result then is
 \begin{align*}
\xi^X_N=&a+\sum_{n=1}^N X_n(P^N_{n/N}-P^N_{(n-1)/N)})-\frac{1}{2\widehat\delta}\sum_{n=1}^N|X_n-X_{n-1}|^2\\
\geq& a+\frac{b}{4}\left(\max_{0 \leq t \leq 1}P^{N,\epsilon}_t-p_0\right)^2-b|P^{N,\epsilon}_1-p_0|^2\\
&+\frac{b}{4}\left(p_0-\min_{0 \leq t \leq 1}P^{N,\epsilon}_t\right)^2-b|P^{N,\epsilon}_1-p_0|^2\\
&+\frac{d}{2} \sum_{k=1,2,\dots}|P^N_{\tau^{N,\epsilon}_k}-P^N_{\tau^{N,\epsilon}_{k-1}}|^2-\frac{d}{2}|P^{N,\epsilon}_1-p_0|^2\\
&+\frac{e}{2}|P^{N,\epsilon}_1-P_0|^2-\frac{e}{2}\sigma^2\\
&-\frac{2}{\widehat\delta} \left(e^2\sigma^2+(2b^2+d^2)\sum_{k=1,2,\dots} |P^N_{\tau^{N,\epsilon}_k}-P^N_{\tau^{N,\epsilon}_{k-1}}|^2\right)\\
&-\frac{1}{2\widehat\delta}(2b+d+e)^2 \left(\frac{\sigma}{\sqrt N}+\max_{0 \leq t \leq 1}|P^{N,\epsilon}_t-p_0|\right)^2.
\end{align*}
Here, the last two lines give an estimate for the transaction costs
(including the liquidation costs) which correspond to our trading
strategy.

It follows that
\begin{align*}
\xi^X_N\geq& a-\sigma^2\left(\frac{e}{2}+\frac{2e^2}{\widehat\delta}+\frac{(2b+d+e)^2}{\widehat\delta}\right)\\
&+\left(\frac{b}{4}-\frac{(2b+d+e)^2}{\widehat\delta}\right)\sup_{0 \leq t \leq 1}|P^{N,\epsilon}_t-p_0|^2\\
&+\frac{e-4b-d}{2}|P^{N,\epsilon}_1-p_0|^2\\
&+\left(\frac{d}{2}-\frac{4b^2+2d^2}{\widehat\delta} \right)\sum_{k=1,2,\dots} |P^N_{\tau^{N,\epsilon}_k}-P^N_{\tau^{N,\epsilon}_{k-1}}|^2.
\end{align*}
Let $b=8\lambda$, $d=4\lambda$, $e=4b+d=36\lambda$ and
$a=\lambda+e\sigma^2=\lambda(1+36\sigma^2)$. Then for sufficiently small $\lambda$ we get
$$\xi^X_N\geq \lambda+ \lambda \sup_{0 \leq t \leq 1}|P^{N,\epsilon}_t-P_0|^2
    +\lambda\sum_{k=1,2,\dots} |P^N_{\tau^{N,\epsilon}_k}-P^N_{\tau^{N,\epsilon}_{k-1}}|^2\geq \lambda Q^{N,\epsilon}$$
and the result follows.
\end{proof}

The proof of the upper bound thus relies on an understanding how to super-replicate the claims $H^{N,\epsilon,K}/(1-\lambda)$. Notice that these claims depend on the values of their underlying at only a fixed number $K$ of sampling times. Such claims turn out to allow for a particularly convenient duality estimate for their super-replication prices:

\begin{Lemma}\label{lem:dualityestimate}
  Let $G$ be a claim of the form
  $G=g\left(\left(\tau^{N,\epsilon}_k,P^N_{\tau^{N,\epsilon}_k}\right)_{k=0,\dots,K}\right)$
  for some measurable,  bounded, nonnegative function
  $g=g((t_k,p_k)_{k=0,\dots,K})$. Then, for any
  $\epsilon,  \eta>0$, we can find for sufficiently large $N$ a
  probability $\QQ^N$ on $(\Omega,\cF_N)$ (also depending on
  $\epsilon$, $\eta$ and $g$) such that for the filtration
  $(\cF^{N,\epsilon}_k)_{k=0,\dots,K}$ generated by
  $(\tau^{N,\epsilon}_k,P^{N,\epsilon}_{\tau^{N,\epsilon}_k})_{k=0,\dots,K}$
  we have
\begin{align}
    &\widehat{\pi}^N\left(G\right) \leq \;\frac{1}{4}\eta \sigma^2 \widehat{\delta}+\E_{\QQ^N}\left[G\right]\label{eq:5new}\\
    & - \frac{\widehat{\delta}}{8 \sigma^2}
  \E_{\QQ^N}\left[\sum_{k=1}^K\left(\frac{\E_{\QQ^N}\left[(P^N_{\tau^{N,\epsilon}_k})^2-(P^N_{\tau^{N,\epsilon}_{k-1}})^2\;\middle|\;\cF^{N,\epsilon}_{k-1}\right]}
  {\E_{\QQ^N}\left[\eta+\tau^{N,\epsilon}_k-\tau^{N,\epsilon}_{k-1}\;\middle|\;\cF^{N,\epsilon}_{k-1}\right]}-\sigma^2\right)^2(\eta+\tau^{N,\epsilon}_k-\tau^{N,\epsilon}_{k-1})\right].
\end{align}
 In addition, under $\QQ^N$, $(P^N_{\tau^{N,\epsilon}_k})_{k=0,\dots,K}$ is close to being a
 martingale in the sense that
 \begin{align}
   \label{eq:401}
  \E_{\QQ^N}\left[\sum_{k=1}^K\left|\E_{\QQ^N}\left[P^N_{\tau^{N,\epsilon}_k}-P^N_{\tau^{N,\epsilon}_{k-1}}\middle|\cF^{N,\epsilon}_{k-1}\right]\right|\right]
\leq (\|G\|_\infty+\eta)/\log N.
 \end{align}
\end{Lemma}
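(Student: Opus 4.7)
The plan is to leverage the dual representation \eqref{eq:5011} (applied to $\widehat r=1$ with depth $\widehat\delta$) and reduce it, via a coarse-graining argument, to a supremum on the coarse filtration $(\cF^{N,\epsilon}_k)$ generated by the stopping times $\tau^{N,\epsilon}_k$. Since $G$ is $\cF^{N,\epsilon}_K$-measurable, for any admissible pair $(\QQ,M)$ one may replace $M$ by its coarse projection $\tilde M_k := \E_\QQ[M_{[N\tau^{N,\epsilon}_k]}\mid \cF^{N,\epsilon}_k]$. A block-by-block conditional Jensen inequality, combined with the defining bound $|P^N_{(n-1)/N}-P^N_{\tau^{N,\epsilon}_{k-1}}|\leq\epsilon$ on block $k$, then bounds the fine-scale penalty $\frac{1}{2\widehat\delta}\E_\QQ\!\sum_n(P^N_{(n-1)/N}-M_{n-1})^2$ below by a coarse-scale analogue of the form $\frac{1}{2\widehat\delta}\E_\QQ\!\sum_k(P^N_{\tau^{N,\epsilon}_k}-\tilde M_{k-1})^2/(\eta+\Delta\tau^{N,\epsilon}_k)$. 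The additive parameter $\eta$ is introduced here to replace the true block lengths (which can be as small as $1/N$) in the denominator, preventing blow-up and producing the additive slack $\tfrac14\eta\sigma^2\widehat\delta$ on the right-hand side of \eqref{eq:5new}.

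The next step is to optimize over $\tilde M$ and complete the square. For fixed $\QQ$, the inner maximum over coarse $\QQ$-martingales $\tilde M$ is attained at the weighted conditional projection of $P^N_{\tau^{N,\epsilon}_k}$ onto $\cF^{N,\epsilon}_{k-1}$; by the martingale property this projection agrees with $P^N_{\tau^{N,\epsilon}_{k-1}}$ up to the drift controlled in \eqref{eq:401}. Substituting back produces a penalty that is a weighted sum of conditional second moments, with numerator $\E_\QQ[(P^N_{\tau^{N,\epsilon}_k})^2-(P^N_{\tau^{N,\epsilon}_{k-1}})^2\mid\cF^{N,\epsilon}_{k-1}]$ and denominator $\E_\QQ[\eta+\Delta\tau^{N,\epsilon}_k\mid\cF^{N,\epsilon}_{k-1}]$. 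Centering the resulting ratio $a$ around $\sigma^2$ (the reference local variance rate of the random walk $P^N$) via the identity $2\sigma^2 a - a^2 = \sigma^4 - (a-\sigma^2)^2$ yields the ratio-form penalty $\frac{\widehat\delta}{8\sigma^2}(a-\sigma^2)^2(\eta+\Delta\tau^{N,\epsilon}_k)$ appearing in the lemma. Define $\QQ^N$ to be any measure that comes within $1/\log N$ of the supremum of the resulting coarse-filtration functional over $\QQ$.

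For the near-martingale property \eqref{eq:401}, use a perturbation argument. Writing $d_{k-1}:=\E_{\QQ^N}[P^N_{\tau^{N,\epsilon}_k}-P^N_{\tau^{N,\epsilon}_{k-1}}\mid \cF^{N,\epsilon}_{k-1}]$, one can modify $\tilde M$ so as to absorb $d_{k-1}$ into the martingale's increment; this produces a competitor whose dual value exceeds that of $\QQ^N$ by an amount on the order of $d_{k-1}^2/(\eta+\Delta\tau^{N,\epsilon}_k)$ per block. Combined with the near-optimality of $\QQ^N$, the bound $\|G\|_\infty<\infty$, and Cauchy--Schwarz against $\sum_k(\eta+\Delta\tau^{N,\epsilon}_k)\leq\eta K+1$, this yields the $\ell^1$-bound $\sum_k\E_{\QQ^N}[|d_{k-1}|]\leq(\|G\|_\infty+\eta)/\log N$. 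The main technical obstacle I anticipate is the block-wise Jensen reduction: the coarse-graining must be orchestrated so that the denominator $\eta+\Delta\tau^{N,\epsilon}_k$ appears naturally both in the optimal $\tilde M_{k-1}$ and in the penalty term, and the additive error $\tfrac14\eta\sigma^2\widehat\delta$ must cleanly absorb the resulting cross-terms independently of $g$; pinning down the precise pre-factors $\tfrac14$ and $\tfrac{1}{8\sigma^2}$ will require careful bookkeeping of the two successive completion-of-the-square steps.
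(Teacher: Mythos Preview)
Your approach---starting from the full dual \eqref{eq:5011} and coarse-graining---is genuinely different from the paper's, and the central reduction step has a real gap. You assert that block-wise Jensen, together with $|P^N_{(n-1)/N}-P^N_{\tau^{N,\epsilon}_{k-1}}|\le\epsilon$, bounds the fine penalty $\sum_n(P^N_{(n-1)/N}-M_{n-1})^2$ below by $\sum_k(P^N_{\tau^{N,\epsilon}_k}-\tilde M_{k-1})^2/(\eta+\tau^{N,\epsilon}_k-\tau^{N,\epsilon}_{k-1})$. But applying Jensen on block $k$ and projecting $M_{n-1}$ onto $\cF^{N,\epsilon}_{k-1}$ yields terms involving $P^N_{\tau^{N,\epsilon}_{k-1}}$ (the block's \emph{starting} value) and a block-averaged projection of $M$, not $P^N_{\tau^{N,\epsilon}_k}$ and $\tilde M_{k-1}$. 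The shift from $\tau^{N,\epsilon}_{k-1}$ to $\tau^{N,\epsilon}_k$ is exactly what carries the quadratic-variation information $\E_{\QQ}[(P^N_{\tau^{N,\epsilon}_k})^2-(P^N_{\tau^{N,\epsilon}_{k-1}})^2\mid\cF^{N,\epsilon}_{k-1}]$ in the lemma's penalty, and no Jensen-type inequality manufactures it from the fine dual. Your completion-of-the-square step then presupposes the near-martingale property to convert $(P^N_{\tau^{N,\epsilon}_k}-\tilde M_{k-1})^2$ into $\Delta(P^N)^2$-form, creating a circularity you do not resolve. The near-martingale argument itself is also flawed: once $\tilde M$ is the inner maximizer for $\QQ^N$, perturbing it can only \emph{decrease} the dual value, so the competitor you describe does not exist; and even granting an $L^2$ drift bound of order $1/\log N$, Cauchy--Schwarz against $\sum_k(\eta+\tau^{N,\epsilon}_k-\tau^{N,\epsilon}_{k-1})$ gives only an $\ell^1$ rate $O(1/\sqrt{\log N})$.

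The paper argues on the \emph{primal} side. It restricts to strategies $X_n\approx\phi_k+\psi_k P^N_{(n-1)/N}$ on block $k$, with the hard constraint $|\phi_k|,|\psi_k|\le\log N$. The gamma component $\psi_k P^N$ generates, via the discrete It\^o identity \eqref{eq:306}, precisely the term $(P^N_{\tau^{N,\epsilon}_k})^2-(P^N_{\tau^{N,\epsilon}_{k-1}})^2-\sigma^2(\tau^{N,\epsilon}_k-\tau^{N,\epsilon}_{k-1})$ in the P\&L, at running cost $\tfrac{\sigma^2}{2\widehat\delta}\psi_k^2(\tau^{N,\epsilon}_k-\tau^{N,\epsilon}_{k-1})$. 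Constrained duality on $\phi$ (F\"ollmer--Kramkov) yields the drift penalty $\log N\sum_k\bigl|\E_{\QQ}[P^N_{\tau^{N,\epsilon}_k}-P^N_{\tau^{N,\epsilon}_{k-1}}\mid\cF^{N,\epsilon}_{k-1}]\bigr|$; a minimax swap over $(\psi,\QQ)$ then fixes $\QQ^N$, and the explicit quadratic minimization in $\psi_k$ (with the $\eta$-slack inserted at that point) produces the penalty with the correct prefactor $\widehat\delta/(8\sigma^2)$. The bound \eqref{eq:401} follows directly from the $\log N$ constraint by taking $\psi\equiv0$ and using $\widehat{\pi}^N(G)\ge0$.
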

\begin{proof}
  Fix $\epsilon>0$ and $N \in \{1,2,\dots\}$. Rather than looking
  among all strategies in $\cX$ for a cost-effective super-hedge, we
  will consider a suitably constrained class. To this end, denote by
  $\cA$ the class of pairs $(\phi_k,\psi_k)_{k=1,\dots,K}$ of
  $(\cF^{N,\epsilon}_k)_{k=0,\dots,K}$-predictable processes such that
  $|\phi_k|,|\psi_k| \leq \log N$ for $k=1,\dots,K$. Each such pair
  induces a strategy $(X^{(\phi,\psi)}_n)_{n=1,\dots,N} \in \cX$ in
  the $N$-step model that we can define piecewise on
  $\left[[N \tau^{N,\epsilon}_{k-1}], [N
    \tau^{N,\epsilon}_{k}]\right]$ for $k=1,\dots,K$ as follows: If
  $\tau^{N,\epsilon}_{k-1}<1-N^{-1/2}$, the duration
  $\tau^{N,\epsilon}_{k} -\tau^{N,\epsilon}_{k-1}$ of the $k$th period
  is at least of order $N^{1/2}$ and we thus can subdivide the
  interval
  $\left[[N \tau^{N,\epsilon}_{k-1}], [N
    \tau^{N,\epsilon}_{k}]\right)$ into two parts. On the first
  (short) subinterval of length $N^{1/3}$ we trade at constant speed
  into a position holding
  $\phi_k+\psi_k P^N_{\tau^{N,\epsilon}_{k-1}}$ risky assets; in the
  periods $n$ afterwards, we hold the position $\phi_k+\psi_k P^N_{n}$
  until the stopping time $[N\tau^{N,\epsilon}_{k}]+1$ when the next
  iteration of this recipe proceeds with $k+1$ instead of $k$ while
  $k<K$. If $\tau^{N,\epsilon}_{k-1} \geq 1-N^{-1/2}$ or when we have
  completed the $K$th such iteration, we complete the construction of
  the strategy by liquidating the obtained position in $N^{1/3}$ steps
  and staying flat until the end.

  Let us analyze the profits and losses and also the costs accruing
  from this strategy. For this, note that, due to the random walk
  dynamics~\eqref{eq:305}, we have
  \begin{align}\label{eq:306}
    \sum_{l < m \leq n} P^N_{\frac{m-1}{N}}(P^N_{\frac{m}{N}}-P^N_{\frac{m-1}{N}})
= \frac{1}{2}\left((P^N_{\frac{n}{N}})^2-(P^N_{\frac{l}{N}})^2-\sigma^2\frac{n-l}{N}\right).
  \end{align}
  Moreover, note that $P^N$ is uniformly bounded on
  $[0,\tau^{N,\epsilon}_K]$ by $|p_0|+K {\epsilon}+\sigma$, so that, in
  particular, $X^{(\phi,\psi)}$ is of size $O(\log N)$.
  Therefore, the profit and loss due to fluctuations in the fundamental
  value incurred by the above strategy is up to a term of order
  $O(\log N/N^{1/6})$ (accounting for the transition period of length
  $N^{1/3}$ when a position change of at most order $\log N$ is
  accomplished while the underlying moves in steps of order $1/\sqrt{N}$) given by
  \begin{align}\label{eq:104}
   \sum_{k=1}^K &
      \phi_k\left(P^N_{\tau^{N,\epsilon}_k}-P^N_{\tau^{N,\epsilon}_{k-1}}\right)\\
     &+ \sum_{k=1}^K \psi_k \frac{1}{2} \left((P^N_{\tau^{N,\epsilon}_k})^2-(P^N_{\tau^{N,\epsilon}_{k-1}})^2-\sigma^2(\tau^{N,\epsilon}_k-\tau^{N,\epsilon}_{k-1})\right).
  \end{align}
  At the same time, the logarithmic bounds on the allowed positions
  ensure that the costs are
  \begin{align}\label{eq:1033}
    \widehat{\kappa}^{X^{(\phi,\psi)}}_N = \frac{1}{2\widehat{\delta}}
    \sum_{k=1}^K \sigma^2
    \psi_k^2(\tau^{N,\epsilon}_k-\tau^{N,\epsilon}_{k-1}) + O(\log^2 N/N^{1/3}).
  \end{align}
  where the $O$-term accounts for the
  $O(N^{1/3}(\log N/N^{1/3})^2)=O(\log^2 N/N^{1/3})$ costs for the
  gradual position build-up of the first $N^{1/3}$ steps of each
  period $k=1,\dots,K$ and for the $O(\log^2 N/N^{1/3})$ costs
  resulting from the one possible jump at the end of this initial build-up which
  is at most of size $O(N^{1/3} \log N /\sqrt{N})=O(\log N/N^{1/6})$;
  the running costs in the second leg of each trading period
  $k=1,\dots,K$ are reflected by the sum in~\eqref{eq:1033}.

  It follows that, for large enough $N$, we will have
  \begin{align}\label{eq:105}
   \widehat{\pi}^N\left(G\right)
    \leq o(1)+\widetilde{\pi}^N\left(G\right)
  \end{align}
  where $\widetilde{\pi}^N\left(G\right)$ denotes the super-replication price of
  the claim $G$
  when restricting to strategies $X^{(\phi,\psi)}$ as above with
 profits and losses given by~\eqref{eq:104} and trading costs given by the sum in ~\eqref{eq:1033}.

 Observing that~\eqref{eq:104} is linear in $\phi$ and recalling the
 constraint $|\phi|\leq \log N$, we get from
 classical linear super-replication duality with convexly constrained
 strategy sets (cf. \cite{FollKram:97}, Theorem~4.1 in
 connection with Example~2.3) that for any fixed $\psi$-component we
 have
\begin{align}\label{eq:103}
   \widetilde{\pi}^N (G)\leq \sup_{\QQ}
  \E_{\QQ}\left[G^{\psi}-\log N
  \sum_{k=1}^K\left|\E_{\QQ}\left[P^N_{\tau^{N,\epsilon}_k}-P^N_{\tau^{N,\epsilon}_{k-1}}\middle|\cF^{N,\epsilon}_{k-1}\right]\right|\right]
 \end{align}
 where the supremum is taken over the set of all measures $\QQ$ on $(\Omega,\cF_N)$
 and where
 \begin{align}
  G^{\psi} \set G&-\sum_{k=1}^K \psi_k
                                        \frac{1}{2}
                                        \left((P^N_{\tau^{N,\epsilon}_k})^2-(P^N_{\tau^{N,\epsilon}_{k-1}})^2-\sigma^2(\tau^{N,\epsilon}_k-\tau^{N,\epsilon}_{k-1})\right)\\
&+\frac{1}{2\widehat{\delta}}
    \sum_{k=1}^K \sigma^2
    \psi_k^2(\tau^{N,\epsilon}_k-\tau^{N,\epsilon}_{k-1})
 \end{align}
 denotes the claim that remains to be super-hedged by suitably choosing $\phi$
 when the $\psi$-component is fixed.

 For $\cE(\QQ,\psi)$ denoting the unconditional expectation
 in~\eqref{eq:103}, it is readily checked that
 $\psi \mapsto \cE(\QQ,\psi)$ is convex for any fixed $\QQ$ and that
 $\QQ \mapsto \cE(\QQ,\psi)$ is concave for any $\psi$
 fixed. Observing that the domains of $\QQ$ and $\psi$ can easily be
 identified with convex and compact subsets in Euclidean space, we can
 thus invoke the Minimax Theorem (e.g. Theorem~45.8 in
 \cite{Strasser:85}) to obtain
 \begin{align}
    \widetilde{\pi}^N\left(G\right) \leq \inf_{\psi} \sup_{\QQ} \cE(\QQ,\psi) = \sup_{\QQ} \inf_{\psi} \cE(\QQ,\psi).
 \end{align}
 In conjunction with~\eqref{eq:105}, we therefore can find a $\QQ^N$
 on $(\Omega,\cF_N)$ such that
 \begin{align}\label{eq:110}
   \widehat{\pi}^N\left(G\right) \leq o(1)+\inf_\psi \cE(\QQ^N,\psi).
 \end{align}
 In order to control the latter infimum, observe that the terms in
 $\cE(\QQ^N,\psi)$ involving $\psi_k$ contribute
 \begin{align}\
   \E_{\QQ^N}&\left[\frac{\sigma^2}{2\widehat{\delta}}
        \psi_k^2(\tau^{N,\epsilon}_k-\tau^{N,\epsilon}_{k-1})- \psi_k
                                        \frac{1}{2}
                                        \left((P^N_{\tau^{N,\epsilon}_k})^2-(P^N_{\tau^{N,\epsilon}_{k-1}})^2-\sigma^2(\tau^{N,\epsilon}_k-\tau^{N,\epsilon}_{k-1})\right)\right]\\
 & \leq \E_{\QQ^N}\Bigg[\frac{\sigma^2}{2\widehat{\delta}}
        \psi_k^2\E_{\QQ^N}\left[\eta+\tau^{N,\epsilon}_k-\tau^{N,\epsilon}_{k-1}\;\middle|\;\cF^{N,\epsilon}_{k-1}\right]\\
&\qquad \qquad -  \frac{1}{2}\psi_k
                                        \E_{\QQ^N}\left[(P^N_{\tau^{N,\epsilon}_k})^2-(P^N_{\tau^{N,\epsilon}_{k-1}})^2-\sigma^2(\eta+\tau^{N,\epsilon}_k-\tau^{N,\epsilon}_{k-1})\;\middle|\;\cF^{N,\epsilon}_{k-1}\right]\Bigg]\\
&\qquad  + \sup_{\Psi}
  \left\{-\frac{\sigma^2}{2\widehat{\delta}}\Psi^2\eta+\frac{1}{2}\Psi\sigma^2\eta\right\},\label{eq:111}
  \end{align}
where $\eta > 0$ is arbitrary and where we used that $\psi_k$ is
$\cF^{N,\epsilon}_{k-1}$-measurable. The minimum over such $\psi_k$ in
the last expectation is attained for
\begin{align}
   \psi_{k}^* = \frac{\widehat{\delta}\E_{\QQ^N}\left[(P^N_{\tau^{N,\epsilon}_k})^2-(P^N_{\tau^{N,\epsilon}_{k-1}})^2-\sigma^2(\eta+\tau^{N,\epsilon}_k-\tau^{N,\epsilon}_{k-1})\;\middle|\;\cF^{N,\epsilon}_{k-1}\right]}{2\sigma^2\E_{\QQ^N}\left[\eta+\tau^{N,\epsilon}_k-\tau^{N,\epsilon}_{k-1}\;\middle|\;\cF^{N,\epsilon}_{k-1}\right]},
\end{align}
which is uniformly bounded in $N$ for $\eta>0$ due to the uniform bound on
$P^N$ up to time $\tau^N_K$. In particular, $|\psi^*_k| \leq \log N$
for sufficiently large $N$. The corresponding minimum is
\begin{align}
  &-\frac{\widehat{\delta}}{8 \sigma^2}
  \E_{\QQ^N}\left[\frac{\E_{\QQ^N}\left[(P^N_{\tau^{N,\epsilon}_k})^2-(P^N_{\tau^{N,\epsilon}_{k-1}})^2-\sigma^2(\eta+\tau^{N,\epsilon}_k-\tau^{N,\epsilon}_{k-1})\;\middle|\;\cF^{N,\epsilon}_{k-1}\right]^2}{\E_{\QQ^N}\left[\eta+\tau^{N,\epsilon}_k-\tau^{N,\epsilon}_{k-1}\;\middle|\;\cF^{N,\epsilon}_{k-1}\right]}\right]\\
&= - \frac{\widehat{\delta}}{8 \sigma^2}
  \E_{\QQ^N}\left[\left(\frac{\E_{\QQ^N}\left[(P^N_{\tau^{N,\epsilon}_k})^2-(P^N_{\tau^{N,\epsilon}_{k-1}})^2\;\middle|\;\cF^{N,\epsilon}_{k-1}\right]}{\E_{\QQ^N}\left[\eta+\tau^{N,\epsilon}_k-\tau^{N,\epsilon}_{k-1}\;\middle|\;\cF^{N,\epsilon}_{k-1}\right]}-\sigma^2\right)^2(\eta+\tau^{N,\epsilon}_k-\tau^{N,\epsilon}_{k-1})\right].
\end{align}
 Now, we just need to combine these contribution to $\cE(\QQ^N,\psi^*)$ with
 estimate~\eqref{eq:110} and the fact that the supremum in
 \eqref{eq:111} is $\eta \sigma^2 \widehat{\delta}/8$ to derive the
 claimed estimate~\eqref{eq:5new}.

For the remaining estimate~\eqref{eq:401}, consider $\psi \equiv 0$ in
the estimate~\eqref{eq:110} for $\widehat{\pi}^N(G)$. Since by absence
of arbitrage at the same time $\widehat{\pi}^N(G) \geq 0$, we can
conclude, at least for large enough $N$,
\begin{align}
  0 \leq \eta +
  \E_{\QQ^N}\left[G-\log N \sum_{k=1}^K\left|\E_{\QQ^N}\left[P^N_{\tau^{N,\epsilon}_k}-P^N_{\tau^{N,\epsilon}_{k-1}}\middle|\cF^{N,\epsilon}_{k-1}\right]\right|\right],
\end{align}
which gives~\eqref{eq:401}.
\end{proof}

The claims $H^{N,\epsilon,K}/(1-\lambda)$ of~\eqref{eq:7} are of the form required
for the previous lemma. This yields an upper bound for
super-replication prices which, however, still depends on $N$ and only
involves a process which is almost a martingale along the times of its
jumps. To get a more convenient upper bound, it will be useful to
consider processes on a slightly expanded time horizon, namely on
$[0,1+\lambda]$ rather than $[0,1]$. The payoff function $h$ can be
rescaled to $h_{1+\lambda}:D[0,1+\lambda] \to \RR$ simply by letting,
for $p \in D[0,1+\lambda]$,
\begin{align}
h_{1+\lambda}(p) \set h\left( [0,1] \ni t \mapsto p\left(t(1+\lambda)\right)\right).
\end{align}
Now, consider, for $\epsilon, \lambda>0$ fixed and
$K=K(\epsilon,\lambda)$ as in Lemma~\ref{lem:epsilondiscretization},
the class $\cD^{\epsilon,\lambda}$ of measurable processes $D$ on some
probability space $(\Omega^D,\cF^D,\P^D)$ of the form
\begin{align}\label{eq:602}
    D_t = \sum_{k=1}^K D_{{\theta}_{k-1}} 1_{[\theta_{k-1},\theta_k)}(t)+(D_{{\theta}_{K}}+\sigma W_{t-\theta_K})1_{[\theta_K,1+\lambda]}(t)
\end{align}
such that, for $k=1,\dots,K$, we have
\begin{align} \label{eq:603}
    D_0&=p_0, \quad |D_{{\theta}_{k}}-D_{{\theta}_{k-1}}| \leq 2\epsilon, \\
    \theta_0&=0, \quad
    \frac{\lambda}{K} \leq \theta_k-\theta_{k-1} \leq \frac{\lambda}{K}+\epsilon^2, \label{eq:604}
\end{align}
and
\begin{align}
    D_{\theta_{k-1}}=\E^D\left[{D_{\theta_k}} \middle|\cF^D_{\theta_{k-1}}\right]\label{eq:606}
\end{align}
where $(\cF^D_{t})_{0 \leq t \leq 1+\lambda}$ denotes the filtration
generated by $D$ and where $W$ is a Brownian motion independent of
$\cF^D_{\theta_K}$ under $\P^D$. It will also be convenient to
associate with each such $D$ the process
\begin{align*}
    \zeta^D_t \set
    \sum_{k=1}^K
     \frac{\E_{\P^D}\left[
      \left(D_{{\theta}_{k}}\right)^2
      -\left(D_{{\theta}_{k-1}}\right)^2
      \middle|\cF^{D}_{\theta_{k-1}}
      \right]}{\E_{\P^D}\left[\theta_{k}-\theta_{k-1}\middle|\cF^D_{\theta_{k-1}}\right]}1_{[\theta_{k-1},\theta_k)}(t)
      +\sigma^21_{[\theta_K,1+\lambda]}(t),
\end{align*}
which, for later use, we observe is bounded for any fixed $\lambda>0$,
$\epsilon<1/\lambda$. Indeed,
combining~\eqref{eq:606}, \eqref{eq:603}
and~\eqref{eq:604} with $K=[(c/\epsilon \lambda)^2]+1$ implies
\begin{align}
  \label{eq:623}
   |\zeta^D_t| \leq \frac{4\epsilon^2}{\lambda/K} \vee \sigma^2 \leq
  \frac{4(c+1)}{\lambda^3} \vee \sigma^2.
\end{align}
With this notation, we get the following duality estimate:
\begin{Lemma}\label{lem:uniformdualityestimate}
 For any $\epsilon,\lambda>0$ and with $K=K(\epsilon,\lambda)$ as in Lemma~\ref{lem:epsilondiscretization}, we have
 \begin{align}
     \widehat{\pi}^N&(H^{N,\epsilon,K}/(1-\lambda))\label{eq:epsilonduality}\\\nonumber
     &\leq
       \left(2+\frac{1}{4}\sigma^2\widehat{\delta}\right)\frac{\lambda}{K}
+ \sup_{D \in \cD^{\epsilon,\lambda}}\E_{\P^D}\left[\frac{h_{1+\lambda}(D)}{1-\lambda}-\frac{\widehat{\delta}}{8\sigma^2}\int_0^{1+\lambda} \left(
      \zeta^{D}_t-\sigma^2
     \right)^2\,dt\right]
 \end{align}
 for sufficiently large $N$.
\end{Lemma}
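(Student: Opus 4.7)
I would apply Lemma~\ref{lem:dualityestimate} to the claim $G := H^{N,\epsilon,K}/(1-\lambda)$, which qualifies as a function of $(\tau^{N,\epsilon}_k, P^N_{\tau^{N,\epsilon}_k})_{k=0,\dots,K}$ since $P^{N,\epsilon}$ is piecewise constant between the sampling times and the indicator $1_{\{\tau^{N,\epsilon}_K=1-N^{-2/3}\}}$ is $\cF^{N,\epsilon}_K$-measurable; boundedness of $G$ follows from the Lipschitz continuity of $h$ and the a priori control $\|P^{N,\epsilon}-p_0\|_\infty \leq 2K\epsilon$ that holds for large $N$ on the support of $H^{N,\epsilon,K}$. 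Taking the free parameter $\eta := \lambda/K$ makes the constant $\eta\sigma^2\widehat{\delta}/4$ in~\eqref{eq:5new} coincide with the first summand on the right of~\eqref{eq:epsilonduality}. The lemma then hands us a measure $\QQ^N$ together with the bound~\eqref{eq:5new} and the near-martingale estimate~\eqref{eq:401}.

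From this input I would build a candidate $D^N\in\cD^{\epsilon,\lambda}$ on an enlargement of $(\Omega,\cF_N,\QQ^N)$ carrying an independent Brownian motion $W$. Insert a mandatory idle gap $\lambda/K$ between consecutive sampling epochs by setting
\[
  \theta^N_k := \theta^N_{k-1} + \tfrac{\lambda}{K} + (\tau^{N,\epsilon}_k - \tau^{N,\epsilon}_{k-1}),
\]
which automatically gives $\theta^N_k-\theta^N_{k-1}\in[\lambda/K,\lambda/K+\epsilon^2]$ and $\theta^N_K\leq \lambda+1-N^{-2/3}<1+\lambda$. To enforce the martingale identity~\eqref{eq:606}, define the jumps of $D^N$ by the $\QQ^N$-martingalized price increments
\[
  D^N_{\theta^N_k}-D^N_{\theta^N_{k-1}} := (P^N_{\tau^{N,\epsilon}_k}-P^N_{\tau^{N,\epsilon}_{k-1}}) - \E_{\QQ^N}\!\left[P^N_{\tau^{N,\epsilon}_k}-P^N_{\tau^{N,\epsilon}_{k-1}}\,\middle|\,\cF^{N,\epsilon}_{k-1}\right],
\]
keep $D^N$ constant between jumps, and extend by $\sigma(W_t-W_{\theta^N_K})$ on the terminal Brownian segment.

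Matching the right-hand side of~\eqref{eq:epsilonduality} to the bound of Lemma~\ref{lem:dualityestimate} then rests on two observations. First, on each slab $[\theta^N_{k-1},\theta^N_k)$ the quantity $\zeta^{D^N}_t-\sigma^2$ is the constant appearing inside the penalty of~\eqref{eq:5new} (using $\theta^N_k-\theta^N_{k-1}=\eta+\tau^{N,\epsilon}_k-\tau^{N,\epsilon}_{k-1}$ and absorbing the square of the drift correction into the $o(1)$ error controlled by~\eqref{eq:401}), while the final Brownian slab contributes zero. Second, by Assumption~\ref{asp:hregular} and the Skorohod-distance bound between $D^N$ and the natural time-rescaling of $P^{N,\epsilon}$ to $[0,1+\lambda]$, which is itself controlled by $\sum_k|\E_{\QQ^N}[P^N_{\tau^{N,\epsilon}_k}-P^N_{\tau^{N,\epsilon}_{k-1}}\mid\cF^{N,\epsilon}_{k-1}]|$, the payoff contributions differ by $O(1/\log N)$, which is absorbed into the slack $2\lambda/K$ on the right of~\eqref{eq:epsilonduality} for $N$ sufficiently large.

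The main technical obstacle is the pathwise validation of the jump constraint $|D^N_{\theta^N_k}-D^N_{\theta^N_{k-1}}|\leq 2\epsilon$. The raw increment is bounded by $\epsilon+\sigma/\sqrt{N}$ by the definition of $\tau^{N,\epsilon}$, but an individual conditional drift can in principle exceed $\epsilon$ pathwise since~\eqref{eq:401} only controls the $\ell^1$-sum of drifts in $\QQ^N$-mean. I would address this via a truncation-and-resampling step: on the event $A_N:=\{\sum_k|\E_{\QQ^N}[\cdot\mid\cF^{N,\epsilon}_{k-1}]|\leq\epsilon\}$, whose complement has $\QQ^N$-probability $O(1/(\epsilon\log N))$ by Markov and~\eqref{eq:401}, every drift is at most $\epsilon$ so the jump bound is automatic; on $A_N^c$ one replaces each increment by a zero-mean variate supported in $[-2\epsilon,2\epsilon]$ drawn from an independent source on the enlarged probability space, preserving~\eqref{eq:606}. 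The boundedness of $G$ then makes the total error contribution from the bad event vanish as $N\uparrow\infty$, completing~\eqref{eq:epsilonduality}.
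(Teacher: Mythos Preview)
Your approach is essentially the paper's: apply Lemma~\ref{lem:dualityestimate} with $\eta=\lambda/K$, shift the sampling times by $k\eta$, compensate the price increments to get a martingale, and control the payoff discrepancy via the Skorohod distance and~\eqref{eq:401}.

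The one substantive difference is that what you flag as ``the main technical obstacle'' is in fact a non-issue, and your truncation-and-resampling step is unnecessary. Since $|P^N_{\tau^{N,\epsilon}_k}-P^N_{\tau^{N,\epsilon}_{k-1}}|\leq\epsilon+\sigma/\sqrt{N}$ holds \emph{pathwise} and $P^N_{\tau^{N,\epsilon}_{k-1}}$ is $\cF^{N,\epsilon}_{k-1}$-measurable, the conditional expectation of the increment is itself pathwise bounded by $\epsilon+\sigma/\sqrt{N}$ (conditional expectation preserves a.s.\ bounds). Hence
\[
\bigl|D^N_{\theta^N_k}-D^N_{\theta^N_{k-1}}\bigr|
\leq \bigl|P^N_{\tau^{N,\epsilon}_k}-P^N_{\tau^{N,\epsilon}_{k-1}}\bigr|
+\bigl|\E_{\QQ^N}[P^N_{\tau^{N,\epsilon}_k}-P^N_{\tau^{N,\epsilon}_{k-1}}\mid\cF^{N,\epsilon}_{k-1}]\bigr|
\leq 2\epsilon+2\sigma/\sqrt{N},
\]
which the paper takes as satisfying~\eqref{eq:603} for large $N$. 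This is exactly the paper's one-line verification. You do not need the event $A_N$, the Markov inequality, or the independent resampling; the estimate~\eqref{eq:401} is used only to control the Skorohod distance in the payoff comparison, not to tame pathwise drifts.
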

\begin{proof}
  Fix $\epsilon, \lambda>0$ and let $K \set K(\epsilon,\lambda)$ and
  $\eta \set \lambda/K$. We will use the notation from
  Lemma~\ref{lem:dualityestimate} and assume henceforth that $N$ is
  large enough for this lemma's assertion to hold true. With $G \set H^{N,\epsilon,K}$ and
  $\tau_k \set \tau^{N,\epsilon}_k$, $k=0,\dots,K$, we furthermore
  define, again for $k=0,\dots,K$,
  \begin{align}
    \label{eq:600}
     \theta^N_k &\set \tau^{N, \epsilon}_k+k\eta, \\
     D^N_{\theta^N_k} &\set p_0 + \sum_{j=1}^k
                        \left(P^N_{\tau^{N,\epsilon}_j}-\E_{\QQ^N}\left[P^N_{\tau^{N,\epsilon}_{j}}\;\middle|\;
    \cF^{D^N}_{\theta^{N}_{j-1}}\right]\right),
  \end{align}
  to specify via~\eqref{eq:602} a process
  $D=D^N$ along with a probability $\P^D \set
  \QQ^N$ that is contained in $\cD^{\epsilon,\lambda}$. Indeed, the
  martingale-like property~\eqref{eq:606} is immediate as are the
  constraints on the intervention times~\eqref{eq:604}. The increment
  restriction~\eqref{eq:603} holds since $P^N_{\tau^{N,\epsilon}_{k}}$
  is within $\epsilon$ of the
  $\cF^{D^N}_{\theta^{N}_{k-1}}$-measurable quantity
  $P^N_{\tau^{N,\epsilon}_{k-1}}$ due to the definition
  $\tau^{N,\epsilon}_k$. Moreover, it is easy to check that
  \begin{align}\label{eq:610}
   \max_{k=0,\dots,K}|P^{N,\epsilon}_{\tau^{N,\epsilon}_{k}}-D^N_{\theta^N_k}|
  \leq
    \sum_{k=1}^K\left|\E_{\QQ^N}\left[P^N_{\tau^{N,\epsilon}_k}-P^N_{\tau^{N,\epsilon}_{k-1}}\middle|\cF^{N,\epsilon}_{k-1}\right]\right|
\end{align}
  and
  \begin{align}
    \label{eq:611}
    \max_{k=0,\dots,K}\left|\frac{\tau^{N,\epsilon}_k+\lambda
    k/K}{1+\lambda}-\tau^{N,\epsilon}_k\right| \leq \lambda.
  \end{align}
  Therefore, on the event $\{\tau^{N,\epsilon}_K=1\}$ we can estimate the Skorohod-distance
  \begin{align}
    \label{eq:612}
    d(P^{N,\epsilon},D^N_{(1+\lambda)\cdot .}) \leq \lambda + \sum_{k=1}^K\left|\E_{\QQ^N}\left[P^N_{\tau^{N,\epsilon}_k}-P^N_{\tau^{N,\epsilon}_{k-1}}\middle|\cF^{N,\epsilon}_{k-1}\right]\right|
  \end{align}
 so that by Assumption~\ref{asp:hregular} on $h$ and ~\eqref{eq:401} we get
 \begin{align}
   \label{eq:613}
    \E_{\QQ^N}[H^{N,\epsilon,K}] \leq &
                                        \E_{\QQ^N}[h_{1+\lambda}(D^N)]
   \\&+ L \E_{\QQ^N}\left[\lambda +
     \sum_{k=1}^K\left|\E_{\QQ^N}\left[P^N_{\tau^{N,\epsilon}_k}-P^N_{\tau^{N,\epsilon}_{k-1}}\middle|\cF^{N,\epsilon}_{k-1}\right]\right|\right]\\
 \leq &\E_{\QQ^N}[h_{1+\lambda}(D^N)]+ L(\lambda+\|H^{N,\epsilon,K}\|_{\infty}+\eta)/\log N\\
 \leq &\E_{\QQ^N}[h_{1+\lambda}(D^N)]+ O(\frac{1}{\log N}),
 \end{align}
 where the latter estimate is due to the simple bound
$$
H^{N,\epsilon,K} \leq h(p_0)+Ld(P^{N,\epsilon},p_0) \leq h(p_0)+L
K\epsilon \text{ on } \{H^{N,\epsilon,K}>0\}.
$$
Next, from ~\eqref{eq:401}, ~\eqref{eq:610} and the fact that
 $D^N,P^N$ are uniformly bounded in $N$ we obtain
   \begin{align}
   \label{eq:614}
   &\E_{\QQ^N}\left(\int_0^{1+\lambda}(\zeta^{D^N}_t-\sigma^2)^2\,dt\right) \\
=&\E_{\QQ^N}\left(\sum_{k=1}^K\left(\frac{\E_{\QQ^N}\left[(D^N_{\tau^{N,\epsilon}_k})^2-(D^N_{\tau^{N,\epsilon}_{k-1}})^2\;\middle|\;\cF^{N,\epsilon}_{k-1}\right]}
{\E_{\QQ^N}\left[\eta+\tau^{N,\epsilon}_k-\tau^{N,\epsilon}_{k-1}\;\middle|\;\cF^{N,\epsilon}_{k-1}\right]}-\sigma^2\right)^2
(\eta+\tau^{N,\epsilon}_k-\tau^{N,\epsilon}_{k-1})\right)\\
\leq&\E_{\QQ^N}\left(\sum_{k=1}^K\left(\frac{\E_{\QQ^N}\left[(P^N_{\tau^{N,\epsilon}_k})^2-(P^N_{\tau^{N,\epsilon}_{k-1}})^2\;\middle|\;\cF^{N,\epsilon}_{k-1}\right]}
{\E_{\QQ^N}\left[\eta+\tau^{N,\epsilon}_k-\tau^{N,\epsilon}_{k-1}\;\middle|\;\cF^{N,\epsilon}_{k-1}\right]}-\sigma^2\right)^2
(\eta+\tau^{N,\epsilon}_k-\tau^{N,\epsilon}_{k-1})\right)+O(\frac{1}{\log N}).
 \end{align}
 Using~\eqref{eq:613} and~\eqref{eq:614} in the estimate~\eqref{eq:5new}
 provided by Lemma~\ref{lem:dualityestimate}
 it thus follows that $\widehat{\pi}^N(H^{N,\epsilon,K}/(1-\lambda))$ cannot be
 larger than the right-hand side of~\eqref{eq:epsilonduality} for
 sufficiently large $N$.
\end{proof}

Letting $\epsilon \downarrow 0$ in the above expression will be made
possible by the following tightness result:

\begin{Lemma}\label{lem:sequentialcompactness}
  For $\lambda>0$ fixed, any sequence $D^m \in \cD^{1/m,\lambda}$,
  $m=1,2,\dots$, contains a subsequence along which
  $\Law(D^m,\int_0^. \zeta^{D^m}_s \,ds \;|\;\P^{D^m})$ converges
  weakly on $D[0,1+\lambda]$ to $\Law(M,\langle M \rangle \;|\;\P^M)$
  for some continuous martingale $M=(M_t)_{0 \leq t \leq 1+\lambda}$
  on a suitable probability space $(\Omega^M,\cF^M,\P^M)$.
\end{Lemma}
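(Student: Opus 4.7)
The plan is to prove joint tightness of $\{(D^m,A^m)\}$, where $A^m_t\set\int_0^t \zeta^{D^m}_s\,ds$, and then identify any subsequential weak limit $(M,A)$ as $(M,\langle M\rangle)$ for a continuous martingale $M$ via the Doob--Meyer decomposition.

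\textbf{Step 1 (a priori bounds).} Condition~\eqref{eq:606} together with the independence of $W$ from $\cF^D_{\theta_K}$ makes $D^m$ a true $(\cF^{D^m}_t)$-martingale on $[0,1+\lambda]$, whose jumps satisfy $|\Delta D^m|\leq 2/m$ by~\eqref{eq:603}. Telescoping the defining identity
\begin{align*}
\zeta^{D^m}_{\theta^m_{k-1}}\E_{\P^{D^m}}\!\bigl[\theta^m_k-\theta^m_{k-1}\bigm|\cF^{D^m}_{\theta^m_{k-1}}\bigr]=\E_{\P^{D^m}}\!\bigl[(D^m_{\theta^m_k})^2-(D^m_{\theta^m_{k-1}})^2\bigm|\cF^{D^m}_{\theta^m_{k-1}}\bigr]
\end{align*}
and adding the Brownian contribution on $[\theta^m_K,1+\lambda]$ yields the exact identity $\E_{\P^{D^m}}[(D^m_t)^2]=p_0^2+\E_{\P^{D^m}}[A^m_t]$. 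The uniform bound~\eqref{eq:623} on $\zeta^{D^m}$ and Doob's $L^2$-inequality then give $\sup_m\E_{\P^{D^m}}[\sup_{t\leq 1+\lambda}(D^m_t)^2]<\infty$.

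\textbf{Step 2 (tightness).} Since $A^m$ is nondecreasing and Lipschitz with constant $\|\zeta^{D^m}\|_\infty$ bounded uniformly by~\eqref{eq:623}, $\{A^m\}$ is relatively compact in $C[0,1+\lambda]$ by Arzel\`a--Ascoli. For $\{D^m\}$ I invoke Aldous's criterion: marginal tightness comes from Step~1. To get the modulus condition, I introduce the right-continuous step process $\bar A^m_t\set A^m_{\theta^m_k}$ for $t\in[\theta^m_k,\theta^m_{k+1})$ (extended by $A^m$ itself on $[\theta^m_K,1+\lambda]$); the computation just performed shows that $(D^m)^2-\bar A^m$ is a martingale, while
\begin{align*}
\sup_{t\leq 1+\lambda}|A^m_t-\bar A^m_t|\leq \|\zeta^{D^m}\|_\infty \max_k(\theta^m_k-\theta^m_{k-1}) \leq C(\lambda/K + 1/m^2)=O(1/m^2),
\end{align*}
using~\eqref{eq:604}. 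Hence, for $\cF^{D^m}$-stopping times $\sigma_m\leq 1+\lambda$ and $h_m\downarrow 0$, optional sampling gives
\begin{align*}
\E_{\P^{D^m}}\!\bigl[(D^m_{\sigma_m+h_m}-D^m_{\sigma_m})^2\bigr]=\E_{\P^{D^m}}\!\bigl[\bar A^m_{\sigma_m+h_m}-\bar A^m_{\sigma_m}\bigr]\leq \|\zeta^{D^m}\|_\infty h_m + O(1/m^2)\to 0.
\end{align*}

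\textbf{Step 3 (identification of the limit).} Pass to a subsequence along which $(D^m,A^m)\Rightarrow(M,A)$ jointly on $D[0,1+\lambda]\times C[0,1+\lambda]$. The vanishing jump bound $|\Delta D^m|\leq 2/m$ forces $M$ to be continuous, and $A$ inherits continuity and monotonicity from $A^m$ with $A_0=0$. The uniform $L^2$-bound from Step~1 provides uniform integrability, so the martingale property of $D^m$ passes to the weak limit upon testing against bounded continuous functionals of finitely many past values, showing that $M$ is an $(\cF^M_t)$-martingale. Applying the same argument to the martingale $(D^m)^2-\bar A^m$, and replacing $\bar A^m$ by $A^m$ at the asymptotically negligible cost $O(1/m^2)$ from Step~2, shows that $M^2-A$ is a martingale. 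Uniqueness of the Doob--Meyer decomposition of the continuous submartingale $M^2$ then forces $A=\langle M\rangle$.

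The main technical obstacle is reconciling the absolutely continuous process $A^m$ with the discrete-time martingale structure of $D^m$, whose genuine predictable compensator only agrees with $A^m$ at the grid times $\theta^m_k$. The auxiliary step process $\bar A^m$ bridges this gap, and the vanishing mesh $\max_k(\theta^m_k-\theta^m_{k-1})=O(1/m^2)$ guaranteed by~\eqref{eq:604} ensures that $\bar A^m$ and $A^m$ are indistinguishable in the weak limit, making the Doob--Meyer identification possible.
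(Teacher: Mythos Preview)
Your argument has a genuine gap: the assertion in Step~1 that $D^m$ is a \emph{continuous-time} $(\cF^{D^m}_t)$-martingale is not justified by condition~\eqref{eq:606}. That condition only gives the discrete martingale property at the grid times $\theta^m_k$. For $t\in(\theta^m_{k-1},\theta^m_k)$ the natural filtration $\cF^{D^m}_t$ is strictly finer than $\cF^{D^m}_{\theta^m_{k-1}}$ because it contains the event $\{\theta^m_k>t\}$, and nothing in the definition of $\cD^{\epsilon,\lambda}$ forbids correlation between the waiting time $\theta^m_k-\theta^m_{k-1}$ and the next increment $D^m_{\theta^m_k}-D^m_{\theta^m_{k-1}}$ given $\cF^{D^m}_{\theta^m_{k-1}}$. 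A trivial two-point example (short wait $\Leftrightarrow$ up-jump, long wait $\Leftrightarrow$ down-jump) satisfies~\eqref{eq:603}--\eqref{eq:606} yet fails the continuous-time martingale property. The same issue applies to your claim that $(D^m)^2-\bar A^m$ is a martingale. Consequently the optional sampling step in your Aldous verification and the direct passage to the limit of ``the martingale property of $D^m$'' in Step~3 are not valid as stated.

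The paper confronts exactly this point by introducing the genuine c\`adl\`ag martingale $\tilde D^m_t\set\E_{\P^m}[D^m_{1+\lambda}\mid\cF^m_t]$, showing $\tilde D^m_{\theta^m_k}=D^m_{\theta^m_k}$ and hence $\|\tilde D^m-D^m\|_\infty=O(1/m)$. Tightness is obtained via Kolmogorov's criterion for the linear interpolation (using BDG for the \emph{discrete} martingale $(D^m_{\theta^m_k})_k$), and the limit identification invokes Whitt's martingale FCLT applied to $(\tilde D^m,[\tilde D^m])$, together with a separate estimate showing $\sup_t|\int_0^t\zeta^{D^m}_sds-[\tilde D^m]_t|\to 0$ in $L^1$. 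Your Aldous/Doob--Meyer route is more economical and would go through if you first replace $D^m$ by $\tilde D^m$ (or, equivalently, restrict the stopping times in Aldous to the grid and handle the $O(1/m^2)$ mesh explicitly); but that replacement is precisely the missing ingredient.
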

\begin{proof}
  Let $K^m \set K(1/m,\lambda)$ as in
  Lemma~\ref{lem:epsilondiscretization} and denote by
  $(\theta^m_k)_{k=0,\dots,K^m}$ the times associated with $D^m$
  via~\eqref{eq:602}--\eqref{eq:606}; let furthermore
  $\P^m \set \P^{D^m}$ denote the associated probability. For any
  $m=1,2,\dots$, we denote by $\{\hat D^m_t\}_{t=0}^{1+\lambda}$ the
  continuous linear interpolation of $D^m$; observe that after time
  $\theta^m_K$, this amounts to $\hat D^m_t \set D^m_{\theta^m_K}+\sigma W_{t-\theta^m_K}$ where $W$ is a Brownian motion  independent of $\mathcal F^{D^m}_{\theta^m_K}$.

  We will verify the Kolmogorov tightness criterion for these
  processes $\hat D^m$, $m=1,2,\dots$. So, take $m \in \{1,2,\dots\}$
  and fix $0\leq t_1<t_2\leq 1+\lambda$. Define the random times
$$\eta_i=K^m\wedge\min\left\{k \in \{0, \dots, K^m\}\;:\;\theta^{m}_k\geq t_i\right\}, \ \ i=1,2.$$
The discrete-time process $\{D^{m}_{\theta^{m}_{k}}\}_{k=0,\dots,K_m}$
is a martingale with respect to the filtration generated by
$(\theta^m_k,D^m_{\theta^m_k})_{k=0,\dots,K^m}$ and
$\eta_1,\eta_2$ are stopping times with respect to this filtration.
 Thus, from the Burkholder--Davis--Gundy inequality,
\begin{align}
  \E_{\mathbb P^{m}}\left[\left| D^{m}_{\theta^{m}_{\eta_2}}- D^{m}_{\theta^{m}_{\eta_1}}\right|^4\right]&
   \leq O(1)
\E_{\mathbb P^{m}}\left[\left(\sum_{j=\eta_1+1,\dots,\eta_2}|D^{m}_{\theta^{m}_{j}}-
D^{m}_{\theta^{m}_{j-1}}|^2\right)^2\right]\\ \label{eq:6.1}
&
\leq O(m^{-4})\E_{\mathbb P^{m}}\left[ |\eta_2-\eta_1|^2\right]
 \end{align}
where the last inequality follows from~\eqref{eq:603} which ensures that
the jumps of $D^m$ are bounded by $2/m$.

Next, since the time between two subsequent jumps is at least
$\lambda/K^m=O(1/m^2)$ and the jumps of $D^m$ are bounded by $2/m$, we
obtain that the size of the (random) slope for the linear
interpolation process $\hat D^{m}$ is at most of order $O(m)$.  This
together with the fact that the time between two subsequent jumps is
less than or equal to $\lambda/K^m+1/m^2$ yields
\begin{align}
|&\hat D^{m}_{t_2}-\hat D^{m}_{t_1}|\\
&\leq \left|\hat
D^{m}_{t_2\wedge\theta^{m}_{K}}-
\hat
D^{m}_{t_1\wedge\theta^{m}_{K}}\right|+\sigma\left|W_{t_2\vee\theta^{m}_{K}}-W_{t_1\vee\theta^{m}_{K}}\right|\\
&\leq
1_{\{t_2>t_1+1/m^2\}}\left|D^{m}_{\theta^{m}_{\eta_2}}-D^{m}_{\theta^{m}_{\eta_1}}\right|\nonumber\\
&+2 O(m)  \left((t_2-t_1)\wedge\left(\lambda/K_m +1/m^2\right)\right)+
\sigma\left|W_{t_2\vee\theta^{m}_{K}}-W_{t_1\vee\theta^{m}_{K}}\right|.\label{eq:6.2}
\end{align}
Using again that the time between two subsequent jumps is at least
$\lambda/K_m=O(1/m^2)$, we obtain that if $t_2>t_1+1/m^2$ then
$\eta_2-\eta_1=O(m^2) (t_2-t_1)$. Thus, from
\eqref{eq:6.1}--\eqref{eq:6.2} and the elementary inequalities
\begin{eqnarray*}
&(t_2-t_1)\wedge\left(\lambda/K^m +1/m^2\right)\leq  O(1/m)\sqrt{t_2-t_1} ,\\
&(z_1+z_2+z_3)^4\leq 81(z^4_1+z^4_2+z^4_3),
\end{eqnarray*}
we obtain
$\E_{\P^m}\left[|\hat D^m_{t_2}-\hat D^{m}_{t_1}|^4\right]=O((t_2-t_1)^2)
$ and tightness follows.

From Prokhorov's theorem we conclude that there exists a
subsequence (still denoted by $m$) and a continuous process
$M=(M_t)_{0 \leq t \leq 1+\lambda}$ that converges in law to some
continuous process $M$ on a suitable probability space $(\Omega^M,\cF^M,\P^M)$.
The obvious inequality $\sup_{0\leq t\leq 1+\lambda} |\hat D^{m}_t- D^{m}_t|\leq 2/m$ yields
the same convergence also for $D^{m}$.

Let us argue next that $M$ is a martingale with respect to its own
filtration. Fix $m$, let $(\cF^m_t)_{0 \leq t \leq 1+\lambda}$ be
the usual (right continuous and complete) filtration generated by
$D^m$ and consider the (RCLL) martingale
$$\tilde D^{m}_t=\E_{\P^{m}}[D^m_{1+\lambda}|\cF^m_t], \quad 0
\leq t \leq 1+\lambda.$$
Recall that the time between
two subsequent jumps is bounded from below. Hence,
\begin{align}
\tilde D^{m}_{\theta^{m}_{k}}&=\E_{\P^{m}}[D^{m}_{\theta^m_{K_m}}|\cF_{\theta^m_{k}}]
=D^{m}_{\theta^{m}_{k}},  \ k=0,1,...,K^m,    \label{eq:neww}
\end{align}
where the last equality follows from~\eqref{eq:606}.
From~\eqref{eq:neww} and the estimate
$$\max_{k=1, \dots, K^m}|D^{m}_{\theta^{m}_{k}}-D^{m}_{\theta^{m}_{k-1}}|\leq 2/m$$
we get $\|\tilde D^m-D^m\|_{\infty}\leq 4/m$.
Thus, the martingales $\tilde D^{m}$, $m=1,2,\dots$, are uniformly integrable and converge weakly to
$M$. From Theorem~5.3 in \cite{Whitt:07} we conclude that $M$ is a (continuous) martingale.

Now, we prove that
$\left(D^{m},\int_{0}^{\cdot}\zeta^{D^m}_s ds\right)$ converges in law
to $(M,\langle  M\rangle)$. For any $m=1,2,\dots$, let the
quadratic variation of the martingale $\tilde D^{m}$ be denoted by $([\tilde
D^{m}]_t)_{0 \leq t \leq 1+\lambda}$.
Theorem 5.5 in~\cite{Whitt:07} then yields the converge in law of
$(\tilde D^{m}, [\tilde D^{m}])$ to $(M,\langle M\rangle)$.
Thus, in order to
complete the proof, it sufficient to establish that
\begin{equation}\label{eq:6.probab}
\lim_{m\rightarrow\infty}\E_{\P^{m}}\left[\sup_{0\leq t\leq 1+\lambda}
\left|\int_{0}^ t\zeta^{D^m}_s ds-[\tilde D^{m}]_t\right|\right]=0.
\end{equation}
To that end, note that by the Burkholder--Davis--Gundy inequality
\begin{align}
\E_{\P^{m}}&\left[\max_{k=1, \dots, K^m}\left([\tilde D^{m}]_{\theta^{m}_{k}}-
[\tilde D^{m}]_{\theta^{m}_{k-1}}\right)^2\right]
\\&\leq \sum_{k=1}^{K^m}\E_{\P^{m}}\left[\left([\tilde D^{m}]_{\theta^{m}_{k}}-
[\tilde D^{m}]_{\theta^{m}_{k-1}}\right)^2\right]\nonumber\\
&\leq O(1) \sum_{k=1}^{K^m}\E_{\P^{m}}\left[\sup_{\theta^{m}_{k-1}\leq t\leq\theta^{m}_{k}}
 \left|\tilde D^{m}_t-
\tilde D^{m}_{\theta^{m}_{k-1}}\right|^4\right]\nonumber\\
&\leq O(1) K^m O(1/m^4)=O(1/m^2).\label{eq:6.4}
\end{align}
Next, observe that $([\tilde
D^{m}]_{\theta^{m}_{k}}-\int_{0}^{\theta^{m}_{k}}\zeta^{D^m}_s
ds)_{k=0, \dots, K_m}$
is a martingale. Hence, applying first the Doob--Kolmogorov inequality and Ito's
isometry and finally also \eqref{eq:604}, \eqref{eq:623}, we conclude
\eqref{eq:6.4}
\begin{align}
\E_{\P^{m}}&\left[\max_{k=0, \dots,\leq K^m}
\left([\tilde D^{m}]_{\theta^{m}_{k}}-\int_{0}^{\theta^{m}_{k}}\zeta^{D^m}_s ds\right)^2\right]\\
&\leq 4\E_{\P^{m}}\left[\sum_{k=1}^{K^m} \left([\tilde D^{m}]_{\theta^{m}_{k}}-[\tilde D^{m}]_{\theta^{m}_{k-1}}+\int_{\theta^m_{k-1}}^{\theta^m_k}|\zeta^{D^m}_s| ds\right)^2\right]\\
&\leq 8\E_{\P^{m}}\left[\sum_{k=1}^{K^m} \left([\tilde D^{m}]_{\theta^{m}_{k}}-[\tilde D^{m}]_{\theta^{m}_{k-1}}\right)^2\right]\\
&+8 K^m \|\zeta^{D^m}\|^2_{\infty} (\lambda/K^m+1/m^2)^2=O(1/m^2).\label{eq:6.3}
 \end{align}
Finally, by combining \eqref{eq:604}, \eqref{eq:623} and applying the Jensen inequality for \eqref{eq:6.4}--\eqref{eq:6.3} we get
\begin{align}
\E_{\P^{m}}&\left[\sup_{0\leq t\leq 1+\lambda}\left|\int_{0}^t\zeta^{D^m}_s ds-[\tilde D^{m}]_t\right|\right]\\
=&\E_{\P^{m}}\left[\sup_{0\leq t\leq \theta^m_{K^m}}\left|\int_{0}^t\zeta^{D^m}_s ds-[\tilde D^{m}]_t\right|\right]\\
\leq&\|\zeta^{D^m}\|_\infty (\lambda/K^m+1/m^2)\\
&+\E_{\P^{m}}\left[\max_{k=1, \dots, K^m}\left|[\tilde D^{m}]_{\theta^{m}_{k}}-
[\tilde D^{m}]_{\theta^{m}_{k-1}}\right|\right]\\
&+\E_{\P^{m}}\left[\max_{k=0,\dots, K^m}\left|\int_{0}^{\theta^m_k}\zeta^{D^m}_s ds-
[\tilde D^{m}]_{\theta^{m}_{k}}\right|\right]=O(1/m)
\end{align}
and \eqref{eq:6.probab} follows.
\end{proof}
We will need the following stability result.
\begin{Lemma}\label{lem:continuity}
  Let $(\Omega,\cF,(\cF_t),\P)$ be an arbitrary filtered probability
  space and suppose $h$ satisfies Assumption~\ref{asp:hregular}. Then the function
\begin{equation} \label{eq:definitio}
 F(z)=\sup_{M}\E\left[z_1h_{}(M)-z_2\int_0^{1} \left(
      \frac{d\langle M \rangle_t}{dt}-z_3\right)^2dt\right], \quad
  z=(z_1,z_2,z_3)\in (0,\infty)^3,
\end{equation}
where the supremum is taken over all continuous martingales
$M=(M_t)_{0 \leq t \leq 1}$ starting in $M_0=p_0$ that have a quadratic
variation $\langle M \rangle$ which is absolutely continuous with bounded
density $\frac{d\langle M \rangle_t}{dt}$.
\end{Lemma}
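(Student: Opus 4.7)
The plan is a standard $\epsilon$-optimizer argument: for $z, z'$ sufficiently close in $(0,\infty)^3$, I will take a near-maximizer $M$ for $F(z)$ and test it against the functional defining $F(z')$, then swap the roles of $z$ and $z'$. The key enabling ingredient is an a priori bound on such near-maximizers that is uniform on compact subsets of $(0,\infty)^3$.

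First I would establish finiteness together with a uniform estimate on $\E[\langle M\rangle_1]$ over near-optimizers. Since $h$ is Lipschitz with respect to the Skorohod metric and nonnegative, it has linear growth in the supremum norm: $h(M) \leq h(p_0) + L \sup_{0 \leq t \leq 1}|M_t - p_0|$. The Burkholder--Davis--Gundy inequality gives $\E[\sup_t|M_t-p_0|] \leq c\,\E[\langle M\rangle_1]^{1/2}$. On the other hand, two applications of Jensen's inequality (first pathwise in $t$, then in $\omega$) yield
\[
\E\!\left[\int_0^1 (\nu_t^2 - z_3)^2\,dt\right] \geq \bigl(\E[\langle M\rangle_1] - z_3\bigr)^2,
\]
where $\nu_t^2 := d\langle M\rangle_t/dt$. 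Inserting these into the near-optimality condition
\[
z_1 \E[h(M)] - z_2 \E\!\left[\int_0^1 (\nu_t^2 - z_3)^2\,dt\right] \geq F(z) - \epsilon
\]
yields a scalar inequality of the form $z_2(a - z_3)^2 \leq c_1 z_1(1 + a^{1/2}) - F(z) + \epsilon$ for $a := \E[\langle M\rangle_1]$. On any compact set $K \subset (0,\infty)^3$ this forces $a$, $\E[h(M)]$, and $\E[\int_0^1(\nu_t^2-z_3)^2\,dt]$ to remain below a common constant $C_K$; in particular $F$ is finite on $K$.

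With these uniform bounds in hand, continuity follows by comparison. Using the near-maximizer $M$ for $F(z)$ as a candidate in the definition of $F(z')$ yields
\[
F(z) - F(z') \leq \epsilon + (z_1 - z'_1)\E[h(M)] - z_2 \E[A(z_3)] + z'_2 \E[A(z'_3)],
\]
where $A(u) := \int_0^1 (\nu_t^2 - u)^2\,dt$. The elementary identity $(\nu^2-z_3)^2 - (\nu^2-z'_3)^2 = (z'_3 - z_3)(2\nu^2 - z_3 - z'_3)$ gives $|\E[A(z_3)] - \E[A(z'_3)]| \leq |z_3 - z'_3|(2C_K + z_3 + z'_3)$, and the splitting $z_2\E[A(z_3)] - z'_2\E[A(z'_3)] = (z_2 - z'_2)\E[A(z_3)] + z'_2(\E[A(z_3)] - \E[A(z'_3)])$ bounds this contribution by a constant multiple of $|z - z'|$, with constants depending only on $K$ and $C_K$. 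The resulting estimate is symmetric in $z$ and $z'$, so sending $\epsilon \downarrow 0$ produces a modulus of continuity (in fact, local Lipschitz regularity) for $F$ on $K$, and $K \subset (0,\infty)^3$ was arbitrary.

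The main obstacle is the a priori bound in the second step. This is where the linear growth of $h$ must be played off against the quadratic penalty on the deviation of the instantaneous variance from $z_3$; without the two Jensen inequalities collapsing the penalty to a coercive function of $\E[\langle M\rangle_1]$, the quadratic variation of the near-optimizer could in principle be unbounded in any norm of interest. Once this coercivity is secured, the remainder of the argument is bookkeeping of linear perturbations.
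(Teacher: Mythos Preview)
Your argument is correct and follows the same overall architecture as the paper's proof: first establish a uniform a~priori bound on near-optimal martingales over compact parameter sets, then deduce local Lipschitz continuity by comparison. The implementations differ in minor technical choices. The paper first expands the square to work with $\hat F(z)=F(z)+z_2z_3^2$, uses the quadratic growth bound $h(p)\leq \|p-p_0\|_\infty^2+c$ together with Doob's $L^2$ inequality, and obtains control on $\E\bigl[\int_0^1 (d\langle M\rangle_t/dt)^2\,dt\bigr]$ directly; this lets it restrict the supremum to a fixed class $\cM_\Theta$ and bound $|\hat F(z)-\hat F(\tilde z)|$ in one stroke. You instead keep the square intact, use the linear growth of $h$ with BDG (or equivalently Doob $L^2$ plus Cauchy--Schwarz), and collapse the penalty via two Jensen inequalities to a coercive function of $\E[\langle M\rangle_1]$. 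Your route avoids the auxiliary $\hat F$ at the cost of tracking $\E[A(z_3)]$ separately in the comparison; the paper's route packages the bookkeeping a bit more cleanly. Both yield local Lipschitz continuity, and neither buys materially more than the other.
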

\begin{proof}
It is sufficient to prove the statement for the function
\begin{align}\nonumber
\hat F(z) &\set F(z)+z_2z^2_3\\&=\sup_{M}\E\left[z_1h_{}(M)-z_2\int_0^{1} \left(
      \frac{d\langle  M \rangle_t}{dt}\right)^2dt+2 z_2z_3\langle  M \rangle_1 \right].
      \label{eq:definition}
      \end{align}
      From the Doob--Kolmogorov inequality, the Jensen inequality and
      the estimate $h(p) \leq \|p-p_0\|^2_{\infty}+c$ for $c=c(1)$ (as
      defined in Lemma~\ref{lem:epsilondiscretization}) we obtain
  \begin{align}\nonumber
  \E&\left[z_1h_{}(M)-z_2\int_0^{1} \left(
      \frac{d\langle  M \rangle_t}{dt}\right)^2dt+2z_2 z_3\langle  M \rangle_1 \right]\\\nonumber
      &\leq z_1 c+(4z_1+2 z_2 z_3)\E\langle  M \rangle_1-z_2\E \left[\int_0^{1} \left(
      \frac{d\langle  M \rangle_t}{dt}\right)^2 dt\right]\\\nonumber
      &\leq z_1 c+(4z_1+2 z_2 z_3)\sqrt{\E \left[\int_0^{1} \left(
      \frac{d\langle  M \rangle_t}{dt}\right)^2dt\right]}-z_2 \E \left[\int_0^{1} \left(
      \frac{d\langle  M \rangle_t}{dt}\right)^2dt\right].\\
      \label{eq:estimate}
      \end{align}
      On the other hand by taking $M\equiv p_0$ we obtain
      $\hat F\geq 0$ (recall that $h$ is nonnegative). Hence, on the
      right hand side of~\eqref{eq:definition} we can restrict the
      supremum to the set of martingales for which the right hand size
      of~\eqref{eq:estimate} is nonnegative.

We conclude that for a bounded set $O\subset\mathbb R^3$ with
$\inf_{z\in O}z_2>0$ there exists $\Theta=\Theta(O)$ such that for any
$z\in O$ we have
\begin{equation}
 \hat F(z)=\sup_{M}\E\left[z_1h_{}(M)-z_2\int_0^{1} \left(
      \frac{d\langle  M \rangle_t}{dt}\right)^2dt+2 z_2z_3\langle  M \rangle_1 \right]
      \end{equation}
      where the supremum is taken over the class $\cM_\Theta$ of continuous martingales as
      in the formulation of this lemma which satisfy in addition that
      $$\E\left[\int_0^{1} \left( \frac{d\langle M
            \rangle_t}{dt}\right)^2dt\right]\leq \Theta.$$  In
      particular, we obtain that $\hat F(z)<\infty$.  By applying
      again the Doob--Kolmogorov inequality and the Jensen inequality
      it follows that for any $z,\tilde z\in O$ we have
      \begin{align}
  |\hat F(z)&-\hat F(\tilde z)|\\
  &\leq \sup_{M \in \cM_\Theta}\left(\E\left[z_1h_{}(M)-z_2\int_0^{1} \left(
      \frac{d\langle  M \rangle_t}{dt}\right)^2dt+2z_2 z_3\langle  M \rangle_1 \right]\right.\\
      &\qquad\qquad\left.-\E\left[\tilde z_1 h_{}(M)-\tilde z_2\int_0^{1} \left(
      \frac{d\langle  M \rangle_t}{dt}\right)^2dt+2\tilde z_2 \tilde z_3\langle  M \rangle_1 \right]\right)\\
  &\leq |z_1-\tilde z_1| (c+4\sqrt{\Theta})
      +|z_2-\tilde z_2|\Theta+2|z_2z_3-\tilde z_2\tilde z_3|\sqrt\Theta
  \end{align}
and continuity follows.
\end{proof}

We now have all the pieces in place that we need for the

\paragraph{Completion of the proof of the upper bound.}

Fix $\lambda>0$. For $m=1,2,\dots$, choose $D^m \in \cD^{1/m,\lambda}$
that get within $1/m$ of the supremum in~\eqref{eq:epsilonduality} for
$\epsilon=1/m$. For this sequence, let $M^\lambda$ be a continuous
martingale on $(\Omega^\lambda,\cF^\lambda,\P^\lambda)$ as in
Lemma~\ref{lem:sequentialcompactness}. By Skorohod's representation
theorem, we can find copies of $D^m$, $m=1,2,\dots,$ and $M^\lambda$
(which to alleviate notation we denote by the same symbols) with the
same respective distributions but specified jointly on a suitable probability space
$(\Omega,\cF,\P)$ such that almost surely
$(D^m,\int_0^. \zeta^{D^m}_s ds)$ converges uniformly to
$(M^\lambda,\langle M^\lambda \rangle)$ as $m \uparrow \infty$. From~\eqref{eq:603} we have
$M^\lambda_0=p_0$ and from~\eqref{eq:623} the quadratic
variation $\langle M^\lambda \rangle$ is absolutely continuous and the volatility process
$\frac{d\langle M^\lambda\rangle_t}{dt}$ is bounded.

Now, use the regularity of $h$ as imposed by Assumption~\ref{asp:hregular} to conclude
\begin{align}\label{eq:9}
    \lim_m \E\left[{h_{1+\lambda}(D^m)}\right] = \E\left[{h_{1+\lambda}(M^\lambda)}\right]
\end{align}
by dominated convergence. Moreover, we can estimate
\begin{align}
    \liminf_m \E\left[\int_0^{1+\lambda} \left(
      \zeta^{D^m}_t-\sigma^2
     \right)^2\,dt\right]
     &\geq \E\left[\int_0^{1+\lambda} \left(
      \frac{d\langle M^\lambda \rangle_t}{dt}-\sigma^2
     \right)^2\,dt\right].\label{eq:10new}
\end{align}
Indeed, observing that the $\zeta^{D^m}$ are uniformly bounded for
$m>1/\lambda$ (cf.~\eqref{eq:623}), we can apply Lemma~A1.1 in
\cite{DelbSch:94} to get
$\tilde{\zeta}^m\in
\textrm{conv}(\zeta^{D^m},\zeta^{D^{m+1}},\dots)$, $m=1,2,\dots$,
converging $\P \otimes dt$-almost everywhere to some process
$\zeta$. In fact, $\zeta=d\langle M^\lambda \rangle/dt$ because by dominated
convergence
$\int_0^. \zeta_t dt = \lim_m \int_0^. \tilde{\zeta}^m_t dt=\lim_m
\int_0^. \zeta^{D^m}_t dt = \langle M^\lambda \rangle$. As a consequence, the
estimate~\eqref{eq:10new} holds by the convexity of
$\zeta \mapsto \E\left[\int_0^{1+\lambda} \left( \zeta_t-\sigma^2
  \right)^2\,dt\right]$ and Fatou's lemma.

As $K(1/m,\lambda) \to \infty$ for $m \uparrow \infty
$, it now follows from Lemma~\ref{lem:uniformdualityestimate} that, for any $\lambda>0$,
\begin{align}
    \limsup_m &\limsup_N \widehat{\pi}^N(H^{N,1/m,K(1/m,\lambda)}/(1-\lambda))\nonumber\\
    &\leq
      \E\left[\frac{h_{1+\lambda}(M^\lambda)}{1-\lambda}-\frac{\widehat{\delta}}{8
      \sigma^2}\int_0^{1+\lambda} \left(
      \frac{d\langle M^\lambda \rangle_t}{dt}-\sigma^2
     \right)^2\,dt\right] \\
     &=\E\left[\frac{h_{}(\tilde M^\lambda)}{1-\lambda}-\frac{\widehat{\delta}}{8
      \sigma^2(1+\lambda)}\int_0^{1} \left(
      \frac{d\langle \tilde M^\lambda \rangle_t}{dt}-\sigma^2(1+\lambda)
     \right)^2\,dt\right]\label{eq:13}
\end{align}
where $\tilde M^\lambda$ is the martingale given by
$\tilde M^\lambda_t=M^{\lambda}_{(1+\lambda)t}$, $0 \leq t\leq 1$.

By applying Lemma~\ref{lem:continuity} we see that for
$\lambda \downarrow 0$ the expectation in~\eqref{eq:13} cannot be
larger than
$\sup_{M} \E\left[h(M)-\frac{\widehat{\delta}}{8\sigma^2}\int_0^{1}
  \left( \frac{d\langle M \rangle_t}{dt}-\sigma^2 \right)^2\,dt\right]
$ where the supremum is taken over all the continuous martingales
$M=(M_t)_{0 \leq t \leq 1}$ considered in Lemma~\ref{lem:continuity}.
Using the randomization technique of
Lemma~7.2 in \cite{DolinskySoner:13}, we thus find that
$$\limsup_{\lambda \downarrow 0}\limsup_m \limsup_N
\widehat{\pi}^N(H^{N,1/m,K(1/m,\lambda)}/(1-\lambda))$$ is dominated
by the supremum on the right hand side of~\eqref{eq:4}. In view of the
estimate~\eqref{eq:6} from Lemma~\ref{lem:epsilondiscretization} in
conjunction with Lemma~\ref{lem:quadraticclaim}, this implies the
desired upper bound~\eqref{eq:4} for the case $x_0=0$ and $\zeta_0=0$.

For the case where $x_0 \not=0$ or $\zeta_0>0$ we need to establish that
\begin{align}
     \limsup_N \pi^N(h(P^N)) &\leq \sup_{\nu \in \cD} \E_{\PP^W}\left[h(P^\nu)-\frac{r \delta}{8 \sigma^2(2-r)}\int_0^1 |\nu_t^2-\sigma^2|^2\,dt\right]\\ &
                                                                   \qquad - P_0x_0-\frac{\iota}{2}x_0^2.
 \label{eq:1000new}
 \end{align}
For the $N$--step market we use the first $N^{1/3}$ steps to liquidate with a constant speed the initial number of shares $x_0$.
The result is that after $N^{1/3}$  steps, the portfolio value
will be $P_0x_0+\frac{\iota}{2}x_0^2+O(N^{-1/6})$ and the spread will be bounded
by $\zeta_0+\frac{x_0}{\delta}$. The number of shares is zero.

In the next $N^{1/3}$ steps we do not trade at all, and so the spread will become of order
$O\left((1-r)^{N^{1/3}}\right)$. Observe that for any $\tilde\delta>\delta$, we have that for sufficiently large $N$,
$$\delta \left(z+O\left((1-r)^{N^{1/3}}\right)\right)^2 \leq
\tilde\delta z^2+(1-r)^{N^{1/4}} \text{ for all } z \geq 0.$$ From
Lemma~\ref{lem:wealthdynamics} we conclude that the limsup of the
original prices $\pi^N(h(P^N))$ is less than or equal to the limsup of
the superhedging prices which correspond to the market depth
$\tilde\delta>\delta$, the same resilience $r$ and an initial position
$\tilde x_0=\tilde \zeta_0=0$ minus
$P_0x_0+\frac{\iota}{2}x_0^2$. Thus, by taking
$\tilde\delta\downarrow \delta$ and applying~\eqref{eq:4} (for
$\tilde\delta$ instead of $\delta$) and by using Lemma
\ref{lem:continuity}, we obtain~\eqref{eq:1000new}.  Let us notice
that we should apply~\eqref{eq:4} for a shift in time of the original
price process $P^N$. Since the shift in time is of order $O(N^{1/3})$
and $h$ is Lipschitz continuous, the difference between the original
payoff $h(P^N)$ and the modified one is of order
$O(N^{1/3}N^{-1/2})=O(N^{-1/6})$ which is vanishing in the limit
$N\rightarrow\infty$.

\section*{Acknowledgments}
The author YD was partially
supported by the ISF grant no 160/17.

\bibliographystyle{spbasic}

\end{document}